\newif\iftr
\newcommand{\lemmaApp}[2]{
\noindent
\textbf{Lemma #1.} \emph{#2}}
\newcommand{\propertyApp}[1]{
\noindent
\textbf{Property  #1.}}
\newcommand{\theoremApp}[1]{
\noindent
\textbf{Theorem  #1.}}
\newcommand{\pic}{$\pi$-calculus}
\newcommand{\respi}{$\mathit{ReS}\pi$}
\newcommand{\ifPi}{\texttt{if}}
\newcommand{\thenPi}{\texttt{then}}
\newcommand{\elsePi}{\texttt{else}}
\newcommand{\ce}[1]{\bar{#1}} 
\newcommand{\requestAct}[3]{\ce{#1}(#2) . #3}
\newcommand{\acceptAct}[3]{#1(#2) . #3}
\newcommand{\send}[2]{#1!\langle#2\rangle}
\newcommand{\sendAct}[3]{\send{#1}{#2} .  #3}
\newcommand{\receive}[2]{#1?(#2)}
\newcommand{\receiveAct}[3]{\receive{#1}{#2} . #3}
\newcommand{\select}[2]{#1 \triangleleft #2}
\newcommand{\selectAct}[3]{\select{#1}{#2} . #3}
\newcommand{\branching}[1]{#1 \triangleright}
\newcommand{\branch}[2]{#1\, :\, #2}
\newcommand{\branchSep}{,}
\newcommand{\branchAct}[2]{\branching{#1} \{#2\}}
\newcommand{\ifthenelseAct}[3]{\ifPi\ #1\ \thenPi\ #2\ \elsePi\ #3}
\newcommand{\inact}{\mathbf{0}}
\newcommand{\res}[1]{(\nu #1)}
\newcommand{\recAct}[2]{\mu #1.#2}
\newcommand{\alphaeq}{\equiv_{\alpha}}
\newcommand{\freev}[1]{\mathrm{fv}(#1)}
\newcommand{\freec}[1]{\mathrm{fc}(#1)}
\newcommand{\freese}[1]{\mathrm{fse}(#1)}
\newcommand{\ctrue}{\texttt{true}}
\newcommand{\cfalse}{\texttt{false}}
\newcommand{\mendpoint}[1]{[#1]}
\newcommand{\role}[1]{\mathtt{#1}}
\newcommand{\mrequestAct}[4]{\ce{#1}\mendpoint{#2}(#3) . #4}
\newcommand{\macceptAct}[4]{#1\mendpoint{#2}(#3) . #4}
\newcommand{\msend}[3]{#1\mendpoint{#2}!\langle#3\rangle}
\newcommand{\msendAct}[4]{\msend{#1}{#2}{#3} .  #4}
\newcommand{\mreceive}[3]{#1\mendpoint{#2}?(#3)}
\newcommand{\mreceiveAct}[4]{\mreceive{#1}{#2}{#3} . #4}
\newcommand{\mselect}[3]{#1\mendpoint{#2} \triangleleft #3}
\newcommand{\mselectAct}[4]{\mselect{#1}{#2}{#3} . #4}
\newcommand{\mbranching}[2]{#1\mendpoint{#2} \triangleright}
\newcommand{\mbranchAct}[3]{\mbranching{#1}{#2} \{#3\}}
\newcommand{\congr}{\equiv}
\newcommand{\subst}[2]{[ #1 / #2 ]}
\newcommand{\expreval}[2]{#1 \downarrow #2}
\newcommand{\red}{\rightarrow}
\newcommand{\rulelabel}[1]{[{\sc #1}]}
\newcommand{\fwredN}[1]{\twoheadrightarrow_{(#1)}}
\newcommand{\bwredN}[1]{\rightsquigarrow_{(#1)}}
\newcommand{\fwbwredN}[1]{\rightarrowtail_{(#1)}}
\newcommand{\boolType}{\mathsf{bool}}
\newcommand{\intType}{\mathsf{int}}
\newcommand{\sharedChanType}[1]{\langle#1\rangle}
\newcommand{\outType}[1]{![#1]}
\newcommand{\inpType}[1]{?[#1]}
\newcommand{\thrType}[1]{![#1]}
\newcommand{\catType}[1]{?[#1]}
\newcommand{\selType}[1]{\oplus[#1]}
\newcommand{\branchType}[1]{\&[#1]}
\newcommand{\inactType}{\mathsf{end}}
\newcommand{\recType}[1]{\mu #1}
\newcommand{\dual}[1]{\overline{#1}}
\newcommand{\sorting}{\Gamma}
\newcommand{\typing}{\Delta}
\newcommand{\basis}{\Theta}
\newcommand{\comp}{\cdot}
\newcommand{\judge}{\ \vdash\ }
\newcommand{\hasType}{\ \triangleright\ }
\newcommand{\singleSes}[2]{\langle #1:#2 \rangle \blacktriangleright} 
\newcommand{\stackComp}{\cdot}
\newcommand{\costBR}{\mathcal{C}_{br}}
\newcommand{\costMO}{\mathcal{C}_{mo}}
\def\dirtree@growth{%
  \ifnum\tikznumberofcurrentchild=1\relax
  \global\advance\dirtree@plvl by 1
  \expandafter\xdef\csname dirtree@p@\the\dirtree@plvl\endcsname{\the\dirtree@lvl}
  \fi
  \global\advance\dirtree@lvl by 1\relax
  \dirtree@clvl=\dirtree@lvl
  \advance\dirtree@clvl by -\csname dirtree@p@\the\dirtree@plvl\endcsname
  \pgf@xa=.32cm\relax
  \pgf@ya=-.32cm\relax
  \pgf@ya=\dirtree@clvl\pgf@ya
  \pgftransformshift{\pgfqpoint{\the\pgf@xa}{\the\pgf@ya}}%
  \ifnum\tikznumberofcurrentchild=\tikznumberofchildren
  \global\advance\dirtree@plvl by -1
  \fi
}
\tikzset{
  dirtree/.style={
    growth function=\dirtree@growth,
    every node/.style={anchor=north},
    every child node/.style={anchor=west},
    edge from parent path={(\tikzparentnode\tikzparentanchor) |- (\tikzchildnode\tikzchildanchor)}
  }
}
\begin{document}

\pagestyle{plain} 

\mainmatter

\def\thetitle{Reversing Single Sessions}

\title{\thetitle
\thanks{This research has been partially founded by
EPSRC EP/K011715/1, EP/K034413/1 and EP/L00058X/1,
EU FP7 FETOpenX Upscale, 
MIUR PRIN Project CINA (2010LHT4KM),
and the COST Actions BETTY (IC1201) and Reversible computation (IC1405).} 
}

\titlerunning{\thetitle} 
\author{Francesco Tiezzi \and Nobuko Yoshida} 
\authorrunning{Tiezzi, Yoshida} 
\tocauthor{F. Tiezzi and N. Yoshida} 
\institute{University of Camerino, Italy\quad \email{francesco.tiezzi@unicam.it} 
\and
Imperial College London, UK\quad \email{n.yoshida@imperial.ac.uk}
\vspace*{-.45cm}}

\maketitle

\begin{abstract} 
Session-based communication has gained a widespread acceptance in practice
as a means for developing safe communicating systems via structured interactions.  
In this paper, we investigate how these structured interactions are affected by reversibility, which provides a 
computational model allowing executed interactions to be undone. In particular, we provide a systematic study 
of the integration of different notions of reversibility in both binary and multiparty single sessions. 
The considered forms of reversibility are: one for completely reversing a given session with one backward step, 
and another for also restoring any intermediate state of the session with either one backward step or multiple ones. 
We analyse the costs of reversing a session in all these different settings. 
Our results show that extending binary single sessions to multiparty ones does not affect the reversibility machinery and its costs.  
\end{abstract}

\section{Introduction}
\label{sec:intro}

In modern ICT systems, the role of communication is more and more crucial. 
This calls for a communication-centric programming style supporting safe and consistent 
composition of protocols. In this regard, in the last decade, primitives and related type theories supporting 
structured interactions, namely \emph{sessions}, among system participants have been 
extensively studied (see, e.g., \cite{HondaVK98,YoshidaV07,CoppoDY07,MostrousY09,HYCJACM15}). 

Another key aspect of ICT systems concerns their reliability. Recently, \emph{reversibility} 
has been put forward as a convenient support for programming reliable systems. 
In fact, it allows a system that has reached an undesired state to undo, in automatic fashion, 
previously performed actions. Again, foundational studies of mechanisms 
for reversing action executions have been carried out (see, e.g., 
\cite{DanosK04,DanosK05,DanosK07a,PhillipsU07,LaneseMS10,LaneseMSS11,CristescuKV13,RS}). 

In this paper, we investigate how the benefits of reversibility can be brought to 
structured communication and, hence, how reversibility and 
sessions affect each other. We concentrate on the primitives 
and mechanisms required to incorporate different notions of reversibility 
into two forms of session, and we analyse the costs of reversing a session
in these different settings. To study the interplay between reversibility and sessions 
we rely on a uniform foundational framework, based on \pic~\cite{PICALC}. 

Specifically, we focus on a simplified form of session, called \emph{single}, 
in which the parties that have created a session can only continue to interact along that 
single session. This setting permits to consider a simpler theoretical framework than 
the one with the usual notion of session, called here \emph{multiple}, where parties can 
interleave interactions performed along different sessions. This allows us to focus on the basic, 
key aspects of our investigation. Although single sessions are simpler, they 
are still  largely used in practice, and differently from multiple sessions (see, e.g., \cite{JLAMP_TY15,BarbaneraDd14}) 
their effect to reversibility is not studied yet in the literature. 

Concerning the parties involved in the sessions, we take into account both \emph{binary} and 
\emph{multiparty} sessions, which involve two or multiple parties, respectively. 
For each kind of session, we investigate the use of two forms of reversibility:
\emph{(i)} \emph{whole session} reversibility, where a single backward step 
reverses completely the given session, thus directly restoring its initialisation state; 
and \emph{(ii) session interactions} reversibility, where any intermediate state of 
the session can be restored, either in a \emph{(ii.a)~multi-step} or a 
\emph{(ii.b) single-step} fashion.
Figure~\ref{fig:cases} sums up the different combinations of sessions and 
reversibility we consider.

\begin{figure}[t]
\centering
\scriptsize
\begin{tikzpicture}[dirtree]
\node {Binary Single Sessions} 
            child { node {(1) Whole session reversibility} }
            child { node {Session interactions reversibility} 
            	child { node {(2) Multi-step} }
		child { node {(3) Single-step} }		
	};
\end{tikzpicture}
{\vspace*{-.4cm}}
\begin{tikzpicture}[dirtree]
\node {Multiparty Single Sessions} 
            child { node {(4) Whole session reversibility} }
            child { node {Session interactions reversibility} 
            	child { node {(5) Multi-step} }
		child { node {(6) Single-step} }		
	};
\end{tikzpicture}
\vspace*{-.1cm}
\caption{Sessions and Reversibility: considered combinations}
\label{fig:cases}
\vspace*{-.5cm}
\end{figure}

\begin{figure}[b]
\vspace*{-.7cm}
\centering
\includegraphics[scale=.3]{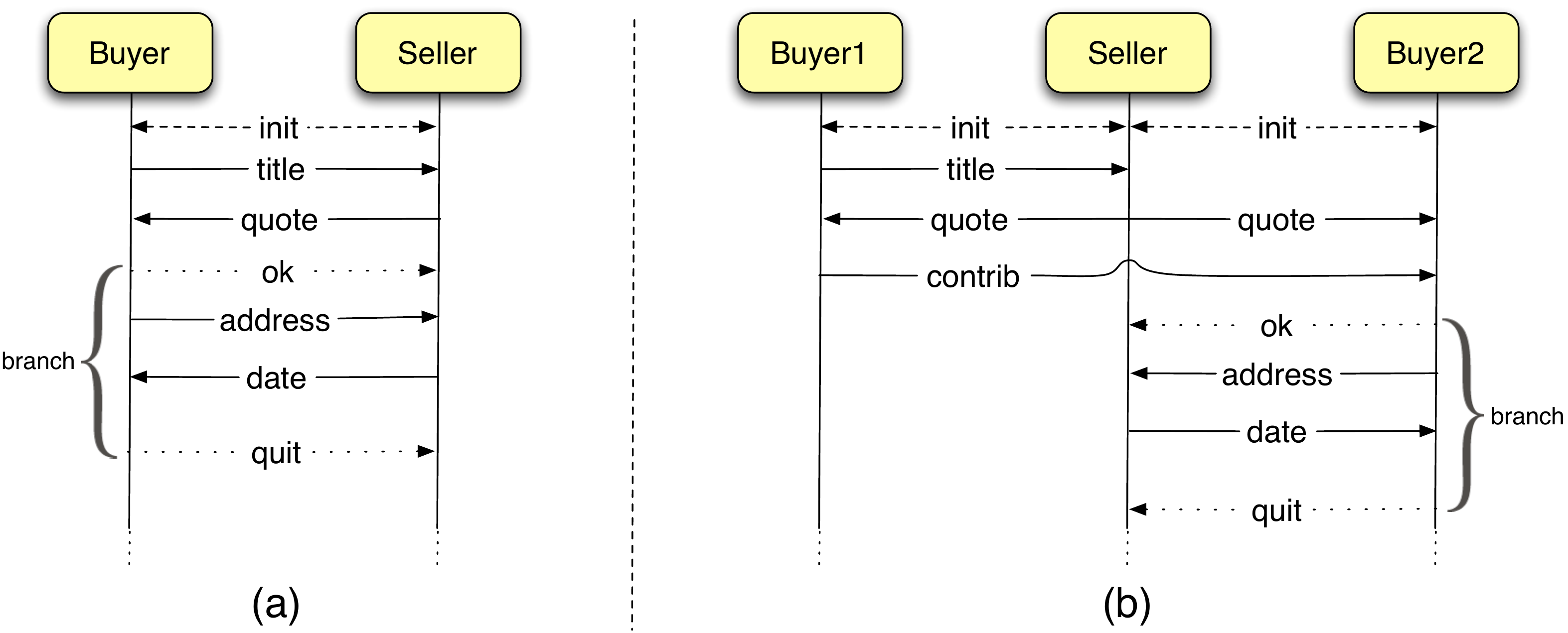}
\vspace*{-.5cm}
\caption{Single session protocols: Buyer-Seller (a) and Two-Buyers-Seller (b)}
\label{fig:examples}
\vspace*{-.5cm}
\end{figure}

We exemplify the reversible approaches throughout the paper by resorting to a typical business protocol example, 
drawn from \cite{HYCJACM15}. 
In case of binary session (Figure~\ref{fig:examples}.(a)), the protocol involves a Buyer willing to buy a book from a Seller.
Buyer sends the book title to Seller, which replies with a quote. 
If Buyer is satisfied by the quote, then he sends his address and Seller sends back the delivery date; 
otherwise Buyer quits the conversation.
In the multiparty case (Figure~\ref{fig:examples}.(b)), the above protocol is refined by considering 
two buyers, Buyer1 and Buyer2, that wish to buy an expensive book from Seller by combining 
their money. Buyer1 sends the title of the book to Seller, which sends the quote to both Buyer1 and Buyer2. 
Then, Buyer1 tells Buyer2 how much he is willing to contribute. Buyer2 evaluates how much 
he has to pay and either accepts, and exchanges the shipping information, or terminates the session.
In these scenarios, reversibility can be entered into the game to deal with 
errors that may occur during the interactions, or to make the protocols more 
flexible by enabling negotiation via re-iteration of some interactions. For example,
a buyer, rather than only accepting or rejecting a quote, can ask the seller for a new quote
by simply reverting the interaction where the current quote has been communicated. 
Similarly, Buyer2 can negotiate the division of the quote with Buyer1. Other possibilities allow the buyers to partially 
undo the current session, in order to take a different branch along the same session, 
or even start a new session with (possibly) another seller.

The contribution of this paper is twofold. Firstly, we show for each 
kind of session discussed above a suitable machinery that permits extending 
the corresponding non-reversible calculus in order to become reversible. 
Secondly, we compare the different cases, i.e. (1)-(6) in Figure~\ref{fig:cases}, 
by means of their costs for reverting a session, given in terms of 
number of backward steps and occupancy of the data structures used to 
store the computation history (which is a necessary information to reverse 
the effects of session interactions). 
Our results about reversibility costs are summarised in Figure~\ref{tab:costs}. 
\begin{wrapfigure}{r}{0.51\textwidth}
\small
\centering
\vspace*{-.7cm}
\begin{tabular}{l@{\ }c@{\ }|@{\ }c@{\ }|@{\ }c@{\ }|@{\ }c@{\ }|@{\ }c@{\ }|@{\ }c@{\ }|}
\cline{2-7}
&\multicolumn{3}{|@{\ }c@{\ }|}{Binary} &\multicolumn{3}{|@{\ }c@{\ }|}{Multiparty} \\
\cline{2-7}
\multicolumn{1}{@{\ }r|}{}
& $\ (1)$ & $(2)$ & $(3)$ & $(4)$ & $(5)$ & $(6)$ \\
\hline\\[-.28cm]\hline
\multicolumn{1}{|r|}{\#\,backward steps} & $1$ & $n$ & $1$ & $1$ & $n$ & $1$\\
\hline
\multicolumn{1}{|r|}{\#\,memory items} & $1$ & $n$ & $n$ & $1$ & $n$ & $n$  \\
\hline
\end{tabular}
\begin{tabular}{@{}l@{}}
{\scriptsize
\:$n$: number of interactions along the 
session to be reversed} 
\end{tabular}
\vspace*{-.2cm}
     \caption{Costs of Reversing Single Sessions}
     \label{tab:costs}
\vspace*{-.7cm}     
\end{wrapfigure}
It is worth noticing that linearity of sessions permits to achieve costs that are at most linear. 
Moreover, despite in case of complex interactions the multiparty approach provides a programming style more natural 
than the binary one, binary and multiparty sessions have the 
same reversibility costs. 
We discuss at the end of the paper how our work can extend to multiple sessions, which require 
a much heavier machinery for reversibility with respect to single ones, and have higher costs. 
This means that it is not convenient to use in the single session setting the same reversible 
machineries already developed for calculi with multiple sessions, which further motivates our 
investigation. 

The practical benefit of our systematic study is that it supplies a support to system designers 
for a conscious selection of the combination of session notion and reversibility mechanism 
that is best suited to their specific needs.

\emph{Summary of the rest of the paper.} 
Section~\ref{sec:host_calculus} provides background notions on binary and multiparty session-based 
variantes of \pic.
Section~\ref{sec:rev_single_sessions} shows how reversibility can be incorporated in single binary
sessions and what is its cost, while Section~\ref{sec:rev_single_sessions_multi} focusses on 
multiparty ones. 
Section~\ref{sec:conclusions} concludes by reviewing strictly related work and by touching upon directions for future work.
\iftr
For readers' convenience, further background material and proofs of results are collected in the Appendices.
\else
We refer to the companion technical report~\cite{RC_TR} for further background material and proofs of results.
\fi

\section{Background on session-based $\pi$-calculi}
\label{sec:host_calculus}

In this section, we give the basic definitions concerning two variants of the \pic, 
enriched with primitives for managing binary and multiparty sessions, respectively.

\medskip

\noindent
\textbf{Binary session calculus.}
The syntax definition of the binary session \pic\ \cite{YoshidaV07} relies on the following base sets: 
\emph{variables} (ranged over by $x$), storing values;
\emph{shared channels} (ranged over by $a$), used to initiate sessions;
\emph{session channels} (ranged over by $s$), consisting on pairs of \emph{endpoints} 
(ranged over by $s$, $\ce{s}$) used by the two parties to exchange values within an established session;
\emph{labels} (ranged over by $l$), used to select and offer branching choices;
and  \emph{process variables} (ranged over by $X$), used for recursion. 
Letter $u$ denotes \emph{shared identifiers}, i.e. shared channels and variables together; 
letter $k$ denotes \emph{session identifiers}, i.e. session endpoints and variables together;
letter $c$ denotes \emph{channels}, i.e. session channels and shared channels together. 
\emph{Values}, including booleans, integers, shared channels
and session endpoints, are ranged over by $v$.

\emph{Processes} (ranged over by $P$) and 
\emph{expressions} (ranged over by $e$ and defined by means of 
a generic expression operator $\text{op}$ representing standard operators on boolean and integer values) are 
given by the grammar in Figure~\ref{fig:syntax_pi}. 
\begin{figure}[t]
\centering
\small
	\begin{tabular}{@{}r@{\ }c@{\ }l@{\ \ }l@{}}
	$P$ & ::= & & \textbf{Processes} \\
	&             & $\requestAct{u}{x}{P}$ \ $\mid$ \ $ \acceptAct{u}{x}{P}$ 
	                   \ $\mid$ \
	                   $\sendAct{k}{e}{P}$ \ $\mid$ \ $\receiveAct{k}{x}{P}$ & \ request, accept ,output, input\\
	& $\mid$ & $\selectAct{k}{l}{P}$ \ $\mid$ \ $ \branchAct{k}{\branch{l_1}{P_1} \branchSep \ldots \branchSep \branch{l_n}{P_n}}$ 
	                   \ $\mid$ \
	                   $\inact$ \ $\mid$ \ $P_1 \!\mid\! P_2$ & \ selection, branching, inact, parallel\\
	& $\mid$ & $ \res{c}\, P$ \ $\mid$ \ $\ifthenelseAct{e}{P_1}{P_2}$ 
	                   \ $\mid$ \
	                   $X$ \ $\mid$ \ $\recAct{X}{P}$ & \ choice, restriction, recursion
	\\[.2cm]
	$e$ & ::= & \ \ $v$ \ $\mid$ \ $\text{op}(e_1,\ldots,e_n)$  & \textbf{Expressions}
	\\[.1cm]
	\hline
	\end{tabular}
	\vspace*{-.4cm}	
	\caption{Binary session calculus: syntax}
	\label{fig:syntax_pi}
\end{figure}

The operational semantics of the calculus is given in terms of a structural congruence and of a reduction relation, 
and is only defined for \emph{closed} terms, i.e. terms without free variables. 
The \emph{structural congruence}, written $\congr$, is standard 
\iftr
(cf. Figure~\ref{fig:congruence_pic} in Appendix).
\else
(see \cite{RC_TR}).
\fi
The \emph{reduction relation}, written $\red$, is the smallest relation on 
closed processes generated by the rules in Figure~\ref{fig:reduction_pic}.
We resort to the auxiliary function $\expreval{\cdot}{\,}$ 
for evaluating closed expressions: 
$\expreval{e}{v}$ says that expression $e$ evaluates to value $v$. 
Notationally, for $P$ a process, $\freev{P}$ denotes the set of free variables in $P$, 
and $\freese{P}$ the set of free session endpoints. 
\begin{figure*}[t]
	\centering
	\small
	\begin{tabular}{l@{\quad}l@{\ }l}
	$\requestAct{a}{x_1}{P_1}\ \mid \ \acceptAct{a}{x_2}{P_2}$
	$\ \red\ $
	$\res{s}(P_1\subst{\ce{s}}{x_1} \mid P_2\subst{s}{x_2})$ 
	& $s,\ce{s} \notin \freese{P_1,P_2}$ & \rulelabel{Con}
	\\[.2cm]
	$\sendAct{\ce{k}}{e}{P_1} \ \mid \ \receiveAct{k}{x}{P_2}$
	$\ \red\ $
	$P_1 \mid P_2\subst{v}{x}$
	& $(k=s\ \, \text{or}\ \, k=\ce{s}),\ \expreval{e}{v}$ & \rulelabel{Com}
	\\[.2cm]
	$\selectAct{\ce{k}}{l_i}{P} \ \mid \ \branchAct{k}{\branch{l_1}{P_1}  \branchSep \ldots \branchSep \branch{l_n}{P_n}}$
	$\ \red\ $
	$P \mid P_i$
	& $(k=s\ \, \text{or}\ \, k=\ce{s}),\ 1 \leq i \leq n$ & \rulelabel{Lab}
	\\[.2cm]
	$\ifthenelseAct{e}{P_1}{P_2}$
	$\ \red\ $
	$P_1$
	& $\expreval{e}{\ctrue}$ & \rulelabel{If1}
	\\[.2cm]
	$\ifthenelseAct{e}{P_1}{P_2}$
	$\ \red\ $
	$P_2$
	& $\expreval{e}{\cfalse}$ & \rulelabel{If2}
	\\[.1cm]
	\end{tabular}
	\begin{tabular}{c}
	$
	\infer[$\,\rulelabel{Par}$]{P_1 \!\mid\! P_2 \ \red\ P_1' \!\mid\! P_2}
	{P_1 \ \red\ P_1'}
	$	
	\qquad
	$
	\infer[$\,\rulelabel{Res}$]{\res{c}P \ \red\ \res{c}P'}
	{P \ \red\ P'}
	$
	\qquad
	$
	\infer[$\,\rulelabel{Str}$]{P_1 \ \red\ P_2}
	{P_1 \congr P_1'\ \red\ P_2' \congr P_2}
	$
	\\[.1cm]
	\hline
	\end{tabular}
	\vspace*{-.4cm}	
	\caption{Binary session calculus: reduction relation}
	\label{fig:reduction_pic}
	\vspace*{-.7cm}
\end{figure*}
We comment on salient points. 
A new session is established when two parallel processes synchronise 
via a shared channel $a$; this results on the generation of a fresh (private) session 
channel whose endpoints are assigned to the two session parties (rule \rulelabel{Con}).
Along a session, the two parties can exchange values (for data- and channel-passing, 
rule \rulelabel{Com}) and labels (for branching selection, rule \rulelabel{Lab}). 
The other rules are standard and state that: 
conditional choice evolves according to the evaluation of the expression argument 
(rules \rulelabel{If1} and \rulelabel{If2});
if a part of a larger process evolves, the whole process evolves accordingly 
(rules \rulelabel{Par} and \rulelabel{Res});
and structural congruent processes have the same reductions (rule \rulelabel{Str}). 

\label{sec:typingPic}

The syntax of \emph{sorts} (ranged over by $S$) and \emph{types} (ranged over by $\alpha$, 
$\beta$) used in the \emph{binary session type discipline} is defined in Figure~\ref{fig:typeSyntax}.
\begin{figure}[t]
	\centering
	\small
	\begin{tabular}{@{}r@{\ }c@{\ }l@{\ }l@{}}
	$S$ & ::= &  & \textbf{Sorts}\\
	&             & $\boolType$ \quad $\mid$ \quad $\intType$ \quad $\mid$ \quad $\sharedChanType{\alpha}$ & \ boolean, integer, shared channel
	\\[.2cm]
	$\alpha$ & ::= &  &  \textbf{Types}\\
	&             & $\outType{S}.\alpha$ \quad $\mid$ \quad $\thrType{\beta}.\alpha$ \quad $\mid$ \quad $\inpType{S}.\alpha$  
	\quad $\mid$ \quad $\catType{\beta}.\alpha$ & \ output, input\\
	& $\mid$ & $\selType{\branch{l_1}{\alpha_1}, \ldots, \branch{l_n}{\alpha_n}}$ 
	\ \  $\mid$ \ \ $\branchType{\branch{l_1}{\alpha_1}, \ldots, \branch{l_n}{\alpha_n}}$
	& \ selection, branching\\			
	& $\mid$ & $\inactType$ \quad $\mid$ \quad $t$ \quad $\mid$ \quad $\recType{t}.\alpha$ & \ end, recursion
	\\[.1cm]
	\hline
	\end{tabular}
	\vspace*{-.3cm}	
	\caption{Binary session calculus: sorts and types}
	\label{fig:typeSyntax}
	\vspace*{-.2cm}
\end{figure}
The type $\outType{S}.\alpha$ represents the behaviour of first outputting a value of sort $S$, 
then performing the actions prescribed by type $\alpha$; 
type $\thrType{\beta}.\alpha$ represents a similar behaviour, which starts with session output 
(\emph{delegation}) instead; 
types $\inpType{S}.\alpha$ and $\catType{\beta}.\alpha$ are the dual ones, 
receiving values instead of sending. 
Type $\selType{\branch{l_1}{\alpha_1}, \ldots, \branch{l_n}{\alpha_n}}$ represents the behaviour 
which would select one of $l_i$ and then behaves as $\alpha_i$, according to the selected $l_i$ 
(internal choice). 
Type $\branchType{\branch{l_1}{\alpha_1}, \ldots, \branch{l_n}{\alpha_n}}$ describes a branching 
behaviour: it waits with $n$ options, and behave as type $\alpha_i$ if the $i$-th action is selected 
(external choice).
Type $\inactType$ represents inaction, acting as the unit of sequential composition.
Type $\recType{t}.\alpha$ denotes a recursive behaviour, representing the behaviour that starts 
by doing $\alpha$ and, when variable $t$ is encountered, recurs to $\alpha$ again. 

\emph{Typing judgements} are of the form $\basis;\sorting \judge P \hasType \typing$,
where $\basis$, $\sorting$ and $\typing$, called \emph{basis}, \emph{sorting} and \emph{typing}
respectively,  are finite partial maps from shared identifiers to sorts, 
from session identifiers to types, and from process variables to typings, respectively
Intuitively,  the judgement $\basis;\sorting \judge P \hasType \typing$ 
stands for ``under the environment $\basis;\sorting$, process $P$ has typing $\typing$''. 
The axioms and rules defining the typing system are standard 
\iftr
(cf. Figure~\ref{fig:typingSysytem} in Appendix).
\else
(see \cite{RC_TR}).
\fi
 
\begin{example}[Buyer-Seller protocol]\label{ex:BS}
We show how the protocol in Figure~\ref{fig:examples}.(a) is rendered in the variant of \pic\
with binary sessions. 
The behaviour of Buyer is described by the following process:
$$
\small
\begin{array}{@{}r@{\ }c@{\ }l}
Buyer & \triangleq & 
\requestAct{a}{x}{\,}
\sendAct{x}{``\mathit{The\ Divine\ Comedy}"}{\,}
\receiveAct{x}{x_{quote}}{}\\
&& 
\ifPi\ \, x_{quote} \leq 20\ \
\thenPi\ \,
\selectAct{x}{l_{ok}}{\,} 
\sendAct{x}{addr()}{\,}
\receiveAct{x}{x_{date}}{\,P}
\ \ \elsePi\ \
\selectAct{x}{l_{quit}}{\,\inact} 
\end{array}
$$
This Buyer is interested in buying the Divine Comedy and is willing to pay not more than $20$ euros.  
The Seller participant instead is rendered as follows:
$$
\small
\begin{array}{@{}r@{\ }c@{\ }l}
Seller & \triangleq & 
\acceptAct{a}{z}{\,}
\receiveAct{z}{z_{title}}{\,}
\sendAct{z}{quote(z_{title})}{\,}
\branchAct{z}{\branch{l_{ok}\!\!}{\!\!
\receiveAct{z}{z_{addr}}{\,}
\sendAct{z}{date()}{Q\, }
} \branchSep 
\branch{l_{quit}\!\!}{\!\!\inact}}
\\
\end{array}
$$
Note that  $\mathit{addr()}$, $\mathit{quote()}$ and $\mathit{date()}$ are used to get a buyer address,
a quote for a given book, and the delivery date, respectively.
The overall specification is $Buyer \mid Seller$. 
\end{example}

\smallskip

\noindent
\textbf{Multiparty session calculus.}
The base sets for the synchronous multiparty session calculus \cite{KouzapasY14} are the same
of the binary case, except for \emph{session endpoints}, which now are denoted by 
$s\mendpoint{\role{p}}$, with $\role{p}$,$\role{q}$ ranging over \emph{roles} 
(represented as natural numbers). Thus, \emph{session identifiers} $k$ 
now range over session endpoints $s\mendpoint{\role{p}}$ or variables $x$.

The syntax of the calculus is defined by the grammar in Figure~\ref{fig:syntax_multipi},
where expressions $e$ are defined as in the binary case (with values that extends to 
multiparty session endpoints). 
\begin{figure}[t]
	\centering
	\small
	\begin{tabular}{@{}r@{\ }c@{\,}l@{\ }l@{}}
	$P$ & ::= &  & \textbf{Processes}\\
	&             & $\mrequestAct{u}{\role{p}}{x}{P}$ \ \ $\mid$ \ \ $\macceptAct{u}{\role{p}}{x}{P}$ 
	                   \ \ $\mid$ \ \
	                   $\msendAct{k}{\role{p}}{e}{P}$ \ \ $\mid$ \ \ $\mreceiveAct{k}{\role{p}}{x}{P}$ 
	               & \ request, accept, output, input\\
	& $\mid$ & $\mselectAct{k}{\role{p}}{l}{P}$ \ \ $\mid$ \ \ 
	                  $\mbranchAct{k}{\role{p}}{\branch{l_1\!\!}{\!\!P_1} \branchSep \ldots \branchSep \branch{l_n\!\!}{\!\!P_n}}$ 
	                  \ \ $\mid$ \ \ $\inact$ \ \ $\mid$ \ \ $P_1 \!\mid\! P_2$
	                  & \ selection, branching, inact, par.\\
	& $\mid$ &  $\ifthenelseAct{e}{P}{Q}$ \ \ $\mid$ \ \ $\res{c}\, P$
	                    \ \ $\mid$ \ \ 
	                   $X$ \ \ $\mid$ \ \  $\recAct{X}{P}$ 
	& \ choice, restriction, recursion
	\\[.1cm]
	\hline
	\end{tabular}
	\vspace*{-.4cm}	
	\caption{Multiparty session calculus: syntax}
	\label{fig:syntax_multipi}
	\vspace*{-.6cm}
\end{figure}
Primitive $\mrequestAct{u}{\role{p}}{x}{P}$ initiates a new session through 
identifier $u$ on the other multiple participants, each one of the form 
$\macceptAct{u}{\role{q}}{x}{P_{\role{q}}}$ where $1 \leq \role{q} \leq \role{p}-1$. 
Variable $x$ will be substituted with the session endpoint used for the interactions
inside the established session. 
Primitive $\msendAct{k}{\role{p}}{e}{P}$ denotes the intention of sending a value to 
role $\role{p}$; similarly, process $\mreceiveAct{k}{\role{p}}{x}{P}$ denotes the intention 
of receiving a value from role $\role{p}$. Selection and branching behave in a similar way. 

As usual the operational semantics is given in terms of a structural congruence and of a 
reduction relation. 
The rules defining the structural congruence are the same ones used for the binary calculus%
\iftr
~(Figure~\ref{fig:congruence_pic} in Appendix~\ref{app:background}), 
\else , 
\fi
where the rule for the scope extension of session channels takes into account the new form 
of session endpoints. 
The \emph{reduction relation} $\red$, instead, is the smallest relation on 
closed processes generated by the rules \rulelabel{If1}, \rulelabel{If2},
\rulelabel{Par}, \rulelabel{res} and \rulelabel{Str} in Figure~\ref{fig:reduction_pic},
and the additional rules in Figure~\ref{fig:reduction_pic_multi}.
\begin{figure*}[t]
	\centering
	\small
	\begin{tabular}{@{}l@{\hspace*{-1.3cm} }r@{\quad}r@{}}
	$\mrequestAct{a}{n}{x}{P_n}\, \mid \, \prod_{i=\{1,..,n-1\}}\macceptAct{a}{i}{x}{P_i}$ 
	$\ \ \red\ \ $
	&
	$s \notin \freese{P_i}$ with $i=\{1,..,n\}$
	&
	\rulelabel{M-Con}
	\\
	\qquad 	
	$\res{s}(P_n\subst{s\mendpoint{n}}{x} \mid \prod_{i=\{1,..,n-1\}}P_i\subst{s\mendpoint{i}}{x})$ 	
	\\[.3cm]
	$\msendAct{s\mendpoint{\role{p}}}{\role{q}}{e}{P}
	\, \mid \, \mreceiveAct{s\mendpoint{\role{q}}}{\role{p}}{x}{Q}$
	$\ \ \red\ \ $
	$P \mid Q\subst{v}{x}$
	& $\expreval{e}{v}$ & \rulelabel{M-Com}
	\\[.3cm]
	$\mselectAct{s\mendpoint{\role{p}}}{\role{q}}{l_i}{P}
	\, \mid \, \mbranchAct{s\mendpoint{\role{q}}}{\role{p}}{\branch{l_1}{P_1} \branchSep \ldots \branchSep \branch{l_n}{P_n}}$
	$\ \ \red\ \ $
	$P \mid P_i$
	& $1 \leq i \leq n$ & \rulelabel{M-Lab}
	\\[.1cm]
	\hline
	\end{tabular}
	\vspace*{-.4cm}	
	\caption{Multiparty session calculus: reduction relation (excerpt of rules)}
	\label{fig:reduction_pic_multi}
	\vspace*{-.5cm}	
\end{figure*}
We comment on salient points.
Rule \rulelabel{M-Con} synchronously initiates a session by requiring all session 
endpoints be present for a synchronous reduction, where each role $\role{p}$ 
creates a session endpoint $s\mendpoint{\role{p}}$ on a fresh session channel $s$. 
The participant with the maximum role ($\mrequestAct{a}{n}{x}{P_n}$) is 
responsible for requesting a session initiation. Rule \rulelabel{M-Com} defines how 
a party with role $\role{p}$ sends a value to the receiving party with role $\role{q}$. 
Selection and branching are defined in a similar way (rule \rulelabel{M-Lab}). 

The type discipline of this synchronous multiparty session calculus is simpler 
than the asynchronous one in \cite{HYCJACM15}, but it is much more elaborate than the binary 
case, as it considers global and local types. Therefore, due to lack of space, we 
relegate the definitions of types, as well as the rules of the corresponding type system,
to 
\iftr
Appendix~\ref{app:background}, 
\else
the companion technical report \cite{RC_TR}
\fi
and refer the interested reader to \cite{KouzapasY14} for a detailed account.

\begin{example}[Two-Buyers-Seller protocol]\label{ex_BBS}
We show how the protocol in Figure~\ref{fig:examples}.(b) is rendered in the variant of \pic\
with the multiparty sessions. 
The behaviour of Buyer1 and Buyer2 are described by the following processes:
$$
\small
\begin{array}{r@{\ }c@{\ }l}
Buyer1 & \triangleq & 
\mrequestAct{a}{3}{x}{\,}
\msendAct{x}{1}{``\mathit{The\ Divine\ Comedy}"}{\,}
\mreceiveAct{x}{1}{x_{quote}}{\,}
\msendAct{x}{2}{split(x_{quote})}{\,P_1}
\\[.2cm]
Buyer2 & \triangleq & 
\macceptAct{a}{2}{y}{\,}
\mreceiveAct{y}{1}{y_{quote}}{\,}
\mreceiveAct{y}{3}{y_{contrib}}{\,}
\ifPi\ \ y_{quote}-y_{contrib} \leq 10 \\
& & 
\thenPi\ \
\mselectAct{y}{1}{l_{ok}}{\,} 
\msendAct{y}{1}{addr()}{\,}
\mreceiveAct{y}{1}{y_{date}}{\,P_2}
\ \
\elsePi\ \
\mselectAct{y}{1}{l_{quit}}{\,\inact} 
\end{array}
$$
Now, Buyer1 divides the quote by means of the $split()$ function. 
The Seller process is similar to the binary case, but for the form of session 
endpoints:
$$
\small
\begin{array}{@{}r@{\ }c@{\ }l@{}}
Seller & \triangleq & 
\macceptAct{a}{1}{z}{\,}
\mreceiveAct{z}{3}{z_{title}}{\,}
\msendAct{z}{2}{quote(z_{title})}{\,}
\msendAct{z}{3}{\mathit{lastQuote}(z_{title})}{\,}\\
&& 
\mbranchAct{z}{2}{\branch{l_{ok}\!\!}{\!
\mreceiveAct{z}{2}{z_{addr}}{\,}
\msendAct{z}{2}{date()}{P_3\, }
} \branchSep\, 
\branch{l_{quit}\!\!}{\!\inact}}
\\
\end{array}
$$
where $\mathit{lastQuote}()$ simply returns the last quote computed for a 
given book. 
\end{example}

\section{Reversibility of single binary sessions}
\label{sec:rev_single_sessions}

This section formally introduces the notion of single session,
and illustrates constructs, mechanisms and costs to support the 
reversibility in the binary cases. 
\vspace*{-.2cm}

\subsection{Single sessions}
In the single sessions setting, when two processes start a session 
their continuations only interact along this single session. Thus, 
neither delegation (i.e., passing of session endpoints) 
nor initialisation of new sessions (also after the session closure)
is allowed.
As clarified below, the 
exclusive use of single sessions is imposed to processes by means of a 
specific type system, thus avoiding the use of syntactical constraints. 

Reversibility is incorporated in a process calculus typically by adding 
memory devices to store information about the computation history, 
which otherwise would be lost during computations.
In all single session cases, i.e. (1)-(6) in Figure~\ref{fig:cases}, 
we will extend the syntax of (binary/multiparty) session-based \pic\ as shown in 
Figure~\ref{fig:rev_single_sessions}.
\begin{figure}[t]
	\centering
	\small
	\begin{tabular}{@{}r@{\ \, }c@{\ \, }l@{\hspace{.3cm}}l@{}}
	\\[-1.2cm]
	$P$ & ::= & & \textbf{Reversible processes}\\
	&             & $\ldots$ \quad $\mid$ \quad $\singleSes{s}{m}P$ & \  \pic\ processes, single session
	\\[.1cm]
	$m$ & ::= & & \textbf{Memory stacks}\\
	&             & $P$ \quad $\mid$ \quad $P \stackComp m$& \ bottom element, push
	\\[.1cm]
	\hline
	\end{tabular}
	\vspace*{-.3cm}
	\caption{Reversible extension}
	\label{fig:rev_single_sessions}
	\vspace*{-.7cm}	
\end{figure}
The term \mbox{$\singleSes{s}{m}P$} represents a \emph{single session} along the channel $s$ 
with associated memory $m$ and code $P$. A \emph{memory} $m$ is a (non-empty) stack 
of processes, each one corresponding to a state of the session (the bottom element
corresponds to the term that initialised the session).
The term \mbox{$\singleSes{s}{m}P$} is a binder, i.e. it binds session channel $s$ in $P$.
In this respect, it acts similarly to operator $\res s P$, but the scope of \mbox{$\singleSes{s}{m}P$}
cannot be extended. 

In the obtained reversible calculi,  terms can perform, besides standard 
\emph{forward computations}, also  \emph{backward computations} 
that undo the effect of the former ones. 

We compare approaches (1)-(6) with respect to the \emph{cost} of 
reverting a session in the worst case, i.e. the cost of completely reverting a session. 
The cost is given in terms of \emph{(i)} the \emph{number of backward reductions} ($\costBR$), necessary to 
complete the rollback, and \emph{(ii)} \emph{memory occupancy} ($\costMO$), i.e. the number of element in the 
memory stack of the session when the rollback starts.
The two kinds of cost depend on the \emph{length} of the considered session, 
given by the number of (forward) steps performed along the session.

\subsection{Binary session reversibility}
\label{sec:single_binary_sessions}
Not all processes allowed by the syntax presented above corresponds to meaningful
processes in the reversible single sessions setting. 
Indeed, on the one hand, the syntax allows terms violating the single sessions 
limitation. 
On the other hand, in a general term of the calculus the history stored in its 
memories may not be consistent with the computation that has taken place.

We address the above issues by only considering a class of well-formed processes, called 
\emph{reachable} processes. In the definition of this class of processes, 
to ensure the use of single sessions only, as in \cite{HYCJACM15} we resort to 
the notion of \emph{simple} process. 
\begin{definition}[Simple process]\label{def:simple}
A process is \emph{simple} if (i)~it is generated by the grammar in Figure~\ref{fig:syntax_pi} and
(ii)~it is typable with a type derivation using prefix rules
where the session typings in the premise and the conclusion are restricted to at most a 
singleton\footnote{Using the standard typing system for the binary session \pic\ 
\iftr
(Figure~\ref{fig:typingSysytem} in Appendix~\ref{app:background}), 
\else
(see \cite[Figure~13]{RC_TR}), 
\fi
point (ii) boils down to:  $\typing$ of rules 
\rulelabel{Req}, 
\rulelabel{Acc},
\rulelabel{Send},
\rulelabel{Rcv},
\rulelabel{Sel}
and \rulelabel{Br} are empty; 
neither \rulelabel{Thr} nor \rulelabel{Cat} is used;
$\typing\comp \typing'$ in \rulelabel{Conc} contains at most a singleton; 
and $\typing$ of the remaining rules contain at most a singleton.}. 
\end{definition}
The point (i) of the above definition states that a simple process has no memory, while 
point (ii) states that each prefixed subterm in a simple process uses only a single session. 

The following properties clarify the notion of simple process.
\begin{property}\label{ex:noDelegation}
In a simple process, delegation is disallowed.
\vspace*{-.2cm}
\end{property}
\begin{proof}
The proof proceeds by contradiction and straightforwardly follows from Definition~\ref{def:simple}
\iftr
(see Appendix~\ref{app:proofsSingleSessions}).
\else
(see \cite{RC_TR}).
\fi
\end{proof}

\begin{property}\label{prop:subordinate}
In a simple process, subordinate sessions (i.e., new sessions initialised within the single session) are disallowed.
\vspace*{-.2cm}
\end{property}
\begin{proof}
The proof proceeds by contradiction 
\iftr
(see Appendix~\ref{app:proofsSingleSessions}).
\else
(see \cite{RC_TR}).
\fi
\end{proof}
We explain the meaning of subordinate session used in Property~\ref{prop:subordinate} 
by means of an example.  Let us consider the process $\acceptAct{a}{x}{\acceptAct{b}{y}{\sendAct{x}{1}P}}$
with \mbox{$y \in \freev{P}$}, 
which initialises a (subordinate) session using channel $b$ within a session previously initialised using channel 
$a$. This process is not simple, because typing it requires to type the (sub)process $\acceptAct{b}{y}{\sendAct{x}{1}P}$ 
under typing $x:\outType{\intType}.\alpha$, which is not empty as required by Definition~\ref{def:simple}.

Now, to ensure history consistency, as in~\cite{CristescuKV13} we only consider \emph{reachable} 
processes, i.e. processes obtained by means of forward reductions from simple processes. 

\begin{definition}[Reachable processes]\label{def:reachable}
The set of \emph{reachable} processes, for the case (i) in Figure~\ref{fig:cases} with $i \in \{1,2,3\}$, 
is the closure under relation $\fwredN{i}$ (see below)  
of the set of simple processes. 
\end{definition}
We clarify this notion by means of an example.
The process stored in the memory of term $\singleSes{s}{(\requestAct{a}{x}{\send{{x}}{1}{}}
\mid \acceptAct{a}{y}{\receive{y}{z}{}})}(\send{\ce{s}}{2}\mid\receive{s}{z})$ is not consistent with the related 
session process, because there is no way to generate the term $(\send{\ce{s}}{2}\mid\receive{s}{z})$ from 
the stored process (in fact, in the session process, the value sent along the endpoint $\ce{s}$ should be $1$
instead of $2$). 

We now present the semantics of the reversibility machinery in the three binary cases.
As usual, the operational semantics is given in terms of a structural congruence and 
a reduction relation\footnote{We use a reduction semantics with respect to a labelled one
because the former is simpler (e.g., it does not require to deal with scope extension 
of names) and, hence, is preferable when the labelled semantics is not needed 
(e.g., here we are not interested in labelled bisimulations). Moreover, 
works about session-based \pic\ use a reduction semantics, as well 
as many reversible calculi (e.g., \cite{LaneseMS10}, \cite{RS}, \cite{LienhardtLMS12}).}.
For all three cases, the laws defining the structural congruence 
are the same of the binary session calculus.
Instead, the \emph{reduction} relations for the cases (1)-(3), written $\fwbwredN{i}$ with $\textrm{i}\in \{1,2,3\}$, 
are given as the union of the corresponding \emph{forward reduction} relations $\fwredN{i}$ and \emph{backward reduction} relations $\bwredN{i}$,
which are defined by different sets of rules in the three cases. 
Notably, the rules in Figure~\ref{fig:reduction_binary}, whose meaning is straightforward,
are shared between the three cases. 

The semantics is only defined for closed, reachable terms, 
where now the definition of \emph{closed} term extends to session endpoints, 
in the sense that all occurrences of session endpoints $s$ and $\ce{s}$
have to be bound by a single session term $\singleSes{s}{m}\cdot$. 
This latter requirement is needed for ensuring that
every running session in the considered process can be reverted; 
for example, in the reachable process $(\send{\ce{s}}{1}\mid\receive{s}{x} \mid Q)$
there is a running session $s$ that cannot be reverted because no computation history 
information (i.e., no memory stack) is available for it. 
We discuss below the additional definitions for the operational semantics of the three binary cases.

\begin{figure*}[!t]
	\centering
	\small
	\begin{tabular}{c}
	$\ifthenelseAct{e}{P_1}{P_2} \ \fwredN{i}\  P_1\qquad \expreval{e}{\ctrue}$\qquad \rulelabel{Fw-If1}
	\\[.2cm]
	$\ifthenelseAct{e}{P_1}{P_2} \ \fwredN{i}\ P_2 \qquad \expreval{e}{\cfalse}$\quad\ \ \rulelabel{Fw-If2}
	\\[.2cm]
	$
	\infer[$\ \rulelabel{Fw-Par}$]{P_1 \mid \ P_2 \ \fwredN{i}\ P_1' \mid P_2}
	{P_1 \ \fwredN{i}\ P_1'}
	$	
	\qquad\qquad
	$
	\infer[$\ \rulelabel{Fw-Res}$]{\res{c}P \ \fwredN{i}\ \res{c}P'}
	{P \ \fwredN{i}\ P'}
	$
	\\[.2cm]
	$
	\infer[$\ \rulelabel{Fw-Str}$]{P \ \fwredN{i}\ Q}
	{P \congr P'\ \fwredN{i}\ Q' \congr Q}
	$	
	\\[.1cm]
	\hline
	\end{tabular}
	\vspace*{-.3cm}	
	\caption{Single sessions: shared forward and backward rules (for $i \in \{1..6\}$); 
	rules \rulelabel{Bw-Par}, \rulelabel{Bw-Res}, \rulelabel{Bw-Str} are omitted 
	(they are like the forward rules where $\bwredN{i}$ replaces $\fwredN{i}$)}
	\label{fig:reduction_binary}
	\vspace*{-.5cm}	
\end{figure*}

\paragraph{\textbf{(1) Whole session reversibility}.}
In this case the reversibility machinery of the calculus permits to undo only whole sessions. 
The forward rules additional to those in Figure~\ref{fig:reduction_binary}  are as follows:  
\begin{center}
\small
\begin{tabular}{@{}l@{\qquad}l@{\quad}r@{}}
$P \ \fwredN{1}\ \singleSes{s}{P}(P_1\subst{\ce{s}}{x} \mid P_2\subst{s}{y})$
&
$P=(\requestAct{a}{x}{P_1} \mid  \acceptAct{a}{y}{P_2})$
&\rulelabel{Fw(1)-Con}
\\[.3cm]
\multicolumn{3}{c}{
$
\infer[$\ \rulelabel{Fw(1)-Mem}$]{\singleSes{s}{m}P  \ \fwredN{1}\ \singleSes{s}{m}P'}
{P \ \red\ P'}
$
} 
\end{tabular}
\end{center}
Rule \rulelabel{Fw(1)-Con} initiates a single session $s$ with the initialisation term $P$ stored in the memory stack. 
As usual the two session endpoints $\ce{s}$ and ${s}$ replace the corresponding 
variables $x$ and $y$ in the two continuations $P_1$ and $P_2$ (within the scope of the 
single session construct). 
Notably, there is no need of using the restriction operator, because in the single 
session setting the session endpoints cannot be communicated outside the session, i.e., 
delegation is disallowed (Property~\ref{ex:noDelegation}). 
Moreover, differently from the non-reversible case (see rule \rulelabel{Con} in Figure~\ref{fig:reduction_pic}), 
there is also no need of requiring the session endpoints $s$ and $\ce{s}$ 
to be fresh in the session code (i.e., in $P_1$ and $P_2$), because 
of the notion of closed process given in this section. 
Rule \rulelabel{Fw(1)-Mem} simply states that a process within the scope of 
a single session evolves with a forward reduction according to 
its evolution with a standard reduction (defined in Figure~\ref{fig:reduction_pic}).

The only additional backward rule is the following one:
\begin{center}
\small
\begin{tabular}{l@{\qquad}r}
$\singleSes{s}{P}Q
\ \bwredN{1}\ 
P
$ 
&   \rulelabel{Bw(1)}
\end{tabular}
\end{center}
This rule permits to rollback the whole session conversation in every moment 
during its execution. In particular, the term that initialised the session is restored 
with a single backward reduction step. Notably, the fact that scope extension is
not allowed for the operator \mbox{$\singleSes{\cdot}{\cdot}\cdot$} ensures that  
the process $Q$ in \rulelabel{Bw(1)} does not contain processes not belonging 
to session $s$, i.e. unwanted deletions are prevented.
It is also worth noticing that the rollback of session $s$ does not involve other sessions,    
as no subordinate sessions can be active in $Q$ (Property~\ref{prop:subordinate}).

\paragraph{\textbf{(2) Multi-step.}}
In this case a session can be reversed either partially or totally. When the rollback starts, it 
proceeds step-by-step and can terminate in any intermediate state of the session, as well as 
in the initialisation state. The additional forward rules are:
\begin{center}
\small
\begin{tabular}{@{}l@{\qquad}l@{\quad}r@{}}
$P \ \fwredN{2}\ \singleSes{s}{P}(P_1\subst{\ce{s}}{x} \mid P_2\subst{s}{y})$
&
$P=(\requestAct{a}{x}{P_1} \mid  \acceptAct{a}{y}{P_2})$ 
&
\rulelabel{Fw(2)-Con}
\\[.3cm]
\multicolumn{3}{c}{
$
\infer[$\ \rulelabel{Fw(2)-Mem}$]{\singleSes{s}{m}P  \ \fwredN{2}\ \singleSes{s}{P \stackComp m}P'}
{P \ \red\ P'}
$
} 
\end{tabular}
\end{center}
Differently from the case (1), here it is necessary to keep track in the memory stack 
of each (forward) interaction that has taken place in the session. 
Therefore, the forward rule \rulelabel{Fw(2)-Mem} pushes the process $P$, representing the state before the 
transition, into the stack. 

The backward reduction relation is defined by the following additional rules:
\begin{center}
\small
\begin{tabular}{l@{\ \ \ }r@{\qquad\quad\ }l@{\ \ \ }r}
$\singleSes{s}{P}Q
\ \bwredN{2}\ 
P
$ 
&   \rulelabel{Bw(2)-1}
&
$\singleSes{s}{P \stackComp m}Q
\ \bwredN{2}\ 
\singleSes{s}{m}P
$ 
&   \rulelabel{Bw(2)-2}
\end{tabular}
\end{center}
Rule \rulelabel{Bw(2)-1} is like to  \rulelabel{Bw(1)}, but here it can be used  only when the memory stack 
contains just one element. The single session is removed because its initialisation 
state is restored. Rule \rulelabel{Bw(2)-2}, instead, permits undoing an intermediate state $Q$, by simply 
replacing it with the previous intermediate state $P$. In this case, since after the reduction the stack is not empty, the single session 
construct is not removed.

\paragraph{\textbf{(3) Single-step.}}
This is similar to the previous case, but the rollback (also of intermediate states) 
is always performed in a single step. The forward reduction relation is defined by the same rules of case
(2), where $\fwredN{3}$ replaces $\fwredN{2}$, while the backward one is defined by the following additional rules:
\begin{center}
\small
\begin{tabular}{@{}l@{\  }r@{\quad\ }l@{\  }r@{}}
$\singleSes{s}{P}Q
\ \bwredN{3}\ 
P
$ 
&   \rulelabel{Bw(3)-1}
&
$\singleSes{s}{P \stackComp m}Q
\ \bwredN{3}\ 
\singleSes{s}{m}P
$ 
&   \rulelabel{Bw(3)-2}
\\[.4cm]
$\singleSes{s}{m \stackComp  P}Q
\ \bwredN{3}\ P
$ 
&   \rulelabel{Bw(3)-3}
&
$\singleSes{s}{m' \stackComp  P \stackComp m}Q
\ \bwredN{3}\ 
\singleSes{s}{m}P
$ 
&   \rulelabel{Bw(3)-4}
\end{tabular}
\end{center}
Rules \rulelabel{Bw(3)-1} and \rulelabel{Bw(3)-2} are like \rulelabel{Bw(2)-1} and 
\rulelabel{Bw(2)-2}, respectively. In particular, rule \rulelabel{Bw(3)-2} is still used to 
replace the current state by the previous one. In addition to this, now rule \rulelabel{Bw(3)-4} permits 
to replace the current state $Q$ also by an intermediate state $P$ of the session computation; 
this is done in a single step. 
Notice that all interactions that took place after the one produced by  $P$ 
(i.e., the states stored in $m'$) are erased when $P$ is restored, while the previous interactions
(i.e., the states stored in $m$)  are kept. Notice also that the selection of the past state to 
restore is non-deterministic. A real-world reversible language instead should provide specific 
primitives and mechanisms to control reversibility (see discussion on Section~\ref{sec:conclusions}); 
anyway the controlled selection of the past states does  not affect the reversibility costs, hence this aspect 
is out-of-scope for this paper and left for future investigations.
Rule \rulelabel{Bw(3)-3} permits to directly undo the whole session from an 
intermediate state. 

\paragraph{\textbf{Results.}}
The cost of reverting a session in setting (1), in terms of both backward reductions ($\costBR$) and 
memory occupancy ($\costMO$), is \emph{constant} w.r.t. the session length.
In case (2), instead, the costs are linear in the length of the session
(recall that we consider the worst case, where the session is completely reversed).
Finally, in setting (3) the cost is constant in terms of backward 
reductions, and linear in terms of memory occupancy.
\begin{theorem}
\label{prop:costs}
Let $n$ be the length of a session, the costs of reverting it are:
case~(1) $\costBR=\costMO=1$; \ 
case~(2) $\costBR=\costMO=n$;\ 
case~(3) $\costBR=1$ and $\costMO=n$.\\[-.6cm]
\end{theorem}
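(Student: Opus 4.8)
\medskip\noindent\emph{Proof plan.}\ \ The plan is to handle the two cost measures separately and, for each of the three calculi, to proceed by induction on the length $n$ of the forward computation that produced the running session, exploiting only the shape of the rules introduced above together with the fact that a reachable term contains exactly one active single-session construct whose body has no subordinate sessions (Property~\ref{prop:subordinate}) and no delegation (Property~\ref{ex:noDelegation}).

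First I would establish the memory occupancy $\costMO$. The key invariant, proved by induction on $n$, is: after a session of length $n$ the memory stack $m$ of the unique term $\singleSes{s}{m}P$ has $1$ element in case~(1) and $n$ elements in cases~(2) and~(3). The base case is the initialisation step: each of \rulelabel{Fw(1)-Con}, \rulelabel{Fw(2)-Con}, \rulelabel{Fw(3)-Con} seeds the stack with exactly one frame (the initialiser). For the inductive step, the only rule that can touch the stack during an intra-session move is the corresponding \textsc{Mem} rule: \rulelabel{Fw(1)-Mem} leaves $m$ unchanged, so the stack stays a singleton; \rulelabel{Fw(2)-Mem} and \rulelabel{Fw(3)-Mem} push the pre-transition process, so each interaction adds exactly one frame. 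The remaining forward rules (\rulelabel{Fw-If1}, \rulelabel{Fw-If2}, \rulelabel{Fw-Par}, \rulelabel{Fw-Res}, \rulelabel{Fw-Str}) never mention a memory stack, so they do not affect the count. Since $\costMO$ is measured at the moment the rollback starts, this yields $\costMO=1$ for~(1) and $\costMO=n$ for~(2) and~(3).

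Next I would count the backward reductions $\costBR$ needed to restore the initialiser from the size-$n$ stack. For case~(1) this is immediate: \rulelabel{Bw(1)} rewrites $\singleSes{s}{P}Q$ to $P$ in one step whatever $Q$ is, hence $\costBR=1$; the no-scope-extension discipline guarantees $Q$ contains nothing foreign to session $s$, so nothing else is lost. For case~(3) the shortcut is \rulelabel{Bw(3)-3} (or \rulelabel{Bw(3)-1} if the session is still in its initial state), which discards the whole stack and returns the bottom element in a single step, so again $\costBR=1$. Case~(2) is the only one requiring a genuine counting argument: the sole applicable backward rules are \rulelabel{Bw(2)-1} and \rulelabel{Bw(2)-2}, and each removes exactly one frame --- \rulelabel{Bw(2)-2} popping the top one, \rulelabel{Bw(2)-1} removing the last frame together with the now-superfluous single-session wrapper. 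Hence every backward sequence that fully reverts a length-$n$ session has length exactly $n$ ($n-1$ uses of \rulelabel{Bw(2)-2} followed by one use of \rulelabel{Bw(2)-1}), giving both the upper and the lower bound $\costBR=n$. The context rules \rulelabel{Bw-Par}, \rulelabel{Bw-Res}, \rulelabel{Bw-Str} only relocate these steps inside a larger term without changing their number.

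The main obstacle I anticipate is the lower bound in case~(2): one must show that no backward reduction can undo more than one stored state at a time, i.e.\ that there is no faster rollback. This reduces to checking that the left-hand sides of \rulelabel{Bw(2)-1} and \rulelabel{Bw(2)-2} exhaust the shapes a reachable session term can take and that neither pops more than one frame --- a short case analysis on the structure of $m$, but precisely the point that separates case~(2) from case~(3).
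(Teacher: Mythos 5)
Your proposal is correct and follows essentially the same route as the paper: case (1) is immediate from the fact that \rulelabel{Fw(1)-Mem} never touches the stack, while cases (2) and (3) rest on an induction on $n$ showing that each \textsc{Mem} rule pushes exactly one frame, combined with the observation that \rulelabel{Bw(2)-1}/\rulelabel{Bw(2)-2} pop exactly one frame each whereas \rulelabel{Bw(3)-1}/\rulelabel{Bw(3)-3} restore the bottom element in a single step. Your only departure is organizational (treating $\costMO$ and $\costBR$ separately and making the case-(2) lower bound explicit, a point the paper leaves implicit in its induction), which does not change the substance of the argument.
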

\begin{proof}
The proof of case (1) is straightforward, while proofs of cases (2) and (3) proceed by induction on $n$
\iftr
(see Appendix~\ref{app:proofsSingleSessions}). 
\else
(see \cite{RC_TR}). 
\fi
\end{proof}

The following result shows that single binary sessions of cases (2) and (3)
enjoy a standard property of reversible calculi (Loop lemma, see \cite{DanosK04}): 
backward reductions are the inverse of the forward ones and vice versa.
\begin{lemma}[Loop lemma]\label{loopLemma}
Let $P = \singleSes{s}{m}Q$ and $P' = \singleSes{s}{m'}Q'$ be two reachable processes in setting (i), with $i\in \{2,3\}$.
$P \fwredN{i} P'$ if and only if $P' \bwredN{i} P$.
\end{lemma}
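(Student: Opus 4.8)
The plan is to prove the two implications separately and essentially uniformly in $i\in\{2,3\}$; the only non-routine ingredient is an auxiliary \emph{history-consistency} property of reachable processes. I would first pin down the shape of the reductions at issue. The construct $\singleSes{s}{\cdot}\cdot$ is a binder opaque to the structural congruence --- no $\congr$-law mentions it and, unlike $\res{\cdot}\cdot$, its scope cannot be extended --- so the contextual rules \rulelabel{Fw-Par}, \rulelabel{Fw-Res} and their backward twins (Figure~\ref{fig:reduction_binary}) cannot act at the top of a term $\singleSes{s}{m}Q$, while \rulelabel{Fw-Str}/\rulelabel{Bw-Str} only rearrange it trivially. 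Hence a forward step $\singleSes{s}{m}Q\fwredN{i}\singleSes{s}{m'}Q'$ is, up to $\congr$, exactly one application of \rulelabel{Fw($i$)-Mem}, forcing $m'=Q\stackComp m$ and $Q\red Q'$; and a backward step $\singleSes{s}{m'}Q'\bwredN{i}\singleSes{s}{m}Q$ whose result is again a single-session term must be \rulelabel{Bw(2)-2} (resp.\ \rulelabel{Bw(3)-2}), since \rulelabel{Bw(2)-1}, \rulelabel{Bw(3)-1} and \rulelabel{Bw(3)-3} produce a bare process; rule \rulelabel{Bw(3)-4} is handled at the end.

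From this the forward implication is immediate: given $P=\singleSes{s}{m}Q\fwredN{i}\singleSes{s}{Q\stackComp m}Q'=P'$ by \rulelabel{Fw($i$)-Mem}, rule \rulelabel{Bw(2)-2}/\rulelabel{Bw(3)-2} applies to $P'$ and returns $\singleSes{s}{m}Q=P$; this uses neither reachability nor the side condition $Q\red Q'$. For the backward implication, by the shape analysis $P'=\singleSes{s}{R\stackComp m}Q'$ and $P=\singleSes{s}{m}R$, and to derive $P\fwredN{i}P'$ by \rulelabel{Fw($i$)-Mem} it suffices to show $R\red Q'$. This is precisely the statement I would isolate as a lemma: \emph{in every reachable process, each single-session subterm $\singleSes{s}{n}T$ whose memory has the form $n=R\stackComp n_0$ with $n_0$ non-empty satisfies $R\red T$, and the bottom element of $n$ is a redex $\requestAct{a}{x}{P_1}\mid\acceptAct{a}{y}{P_2}$}. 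Since the source $P'$ of the backward step is reachable with memory $R\stackComp m$ and $m$ non-empty, the lemma gives $R\red Q'$, which closes the argument.

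The real content is this auxiliary lemma, which I would prove by induction on the length of a forward derivation from a simple process witnessing reachability (Definition~\ref{def:reachable}). The base case is vacuous, since a simple process has no single-session subterm (Definition~\ref{def:simple}). In the inductive step, the only non-contextual rules that create or enlarge a single-session subterm are \rulelabel{Fw($i$)-Con}, which creates $\singleSes{s}{P_0}C_0$ with $P_0$ a \rulelabel{Con}-redex and a one-element memory, to which the lemma's hypothesis does not apply, and \rulelabel{Fw($i$)-Mem}, which rewrites $\singleSes{s}{n}R$ to $\singleSes{s}{R\stackComp n}T$ with $R\red T$, establishing the invariant for exactly the new subterm (its new top is $R$, which $\red$-reduces to the new code $T$); every other single-session subterm is merely relocated and is covered by the induction hypothesis, and no subordinate single session can arise inside a running one (Property~\ref{prop:subordinate}, since host-calculus reductions introduce no session-initiation prefixes, so \rulelabel{Fw($i$)-Con} never fires under an existing single session). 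I expect this lemma --- specifically, choosing the invariant ``top of stack $\red$ code, and bottom of stack is a \rulelabel{Con}-redex'' and checking it is inductive, while correctly isolating the bottom element, which reduces to $\res{s}C_0$ rather than to $C_0$ and is precisely why the hypothesis excludes $n_0=\epsilon$ (and precisely the generality consumed above) --- to be the main obstacle. Finally, for $i=3$ one also checks \rulelabel{Bw(3)-4}: with an empty erased segment it is \rulelabel{Bw(3)-2}, already covered; with a non-empty one it compresses several undo steps into one, and the correspondence for it is with the iterated run of \rulelabel{Fw(3)-Mem} steps that built the erased segment rather than with a single forward step.
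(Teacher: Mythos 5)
Your proposal is correct and follows essentially the same skeleton as the paper's proof: the direction from a forward step to its undoing is a direct application of \rulelabel{Bw(2)-2}/\rulelabel{Bw(3)-2}, structural congruence is dealt with separately (the paper via the \rulelabel{Fw-Str}/\rulelabel{Bw-Str} inductive cases, you via a preliminary shape analysis, which amounts to the same thing), and reachability is what powers the converse direction. Where you genuinely differ is in how that converse step is discharged. The paper, at its base case \rulelabel{Bw(2)-2}, simply asserts that since $\singleSes{s}{Q\stackComp m}Q'$ is reachable it ``must be generated by'' the forward reduction $\singleSes{s}{m}Q \fwredN{2} \singleSes{s}{Q\stackComp m}Q'$, i.e.\ it takes the history-consistency of reachable memories for granted; you instead isolate this as an explicit invariant (top of stack $\red$-reduces to the code, bottom of stack is a \rulelabel{Con}-redex) and prove it by induction on the forward derivation witnessing reachability (Definition~\ref{def:reachable}), using Property~\ref{prop:subordinate} and the fact that the premise of \rulelabel{Fw($i$)-Mem} is the plain reduction $\red$ to rule out nested single sessions. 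This buys a self-contained argument where the paper leaves its key step implicit, and your invariant also justifies excluding \rulelabel{Bw(2)-1}, \rulelabel{Bw(3)-1} and \rulelabel{Bw(3)-3} (the restored bottom element is never itself a single-session term), a point the paper does not discuss; your observation that the bottom element $\red$-reduces to a term of the form $\res{s}(\cdots)$, which is why it must be excluded from the ``top reduces to code'' clause, is exactly right. Finally, on \rulelabel{Bw(3)-4}: you correctly note that a step erasing a non-empty segment can only be matched by an iterated run of \rulelabel{Fw(3)-Mem} steps, never by a single forward step; be aware that this means the biconditional as literally stated does not hold for such backward steps, and the paper's ``the proof for case (3) proceeds similarly'' silently treats only the \rulelabel{Bw(3)-2} base case --- so your explicit caveat is a more honest account of the same restriction rather than a defect of your argument relative to the paper's.
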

\begin{proof}
The proof for the \emph{if} (resp. \emph{only if}) part is by 
induction on the derivation of 
the forward (resp. backward) reduction 
\iftr
(see Appendix~\ref{app:proofsSingleSessions}). 
\else
(see \cite{RC_TR}). 
\fi
\end{proof}
Notably, case (1) does not enjoy this lemma because backward reductions do not allow to 
restore intermediate states of sessions. 

We conclude the section with an example showing the three approaches at work
on the Buyer-Seller protocol. 

\begin{example}[Reversible Buyer-Seller protocol]\label{ex:BS_rev}
Let us consider a reversible scenario concerning the Buyer-Seller protocol specified in Example~\ref{ex:BS}, where there are 
two sellers and a buyer. 

In case (1), the system evolves as follows:
$$
\small
\begin{array}{l}
Seller_1 \mid Seller_2 \mid Buyer
\\[.1cm]
\fwredN{1}\  Seller_1 \ \mid \ \singleSes{s}{Seller_2 \mid Buyer}(\receiveAct{s}{z_{title}}{P_s} \mid \sendAct{\ce{s}}{v_{title}}{P_b})
\\[.1cm]
\fwredN{1}^*\  Seller_1 \ \mid \ \singleSes{s}{Seller_2 \mid Buyer}(Q[\ldots] \mid P[\ldots,v_{date}/x_{date}]) \ \ = \ \ R
\end{array}
$$
where $v_{title}$ stands for $``\mathit{The\ Divine\ Comedy}"$. After these interactions between $Buyer$ and $Seller_2 $, 
the buyer has received a delivery date from the seller. In the unfortunate case that this date is not suitable for the buyer, 
the session can be reversed as follows:
$$
\small
\begin{array}{l}
R \ \bwredN{1}\  Seller_1 \mid Seller_2 \mid Buyer
\end{array}
$$
Now, $Buyer$ can start a new session with $Seller_2$ as well as with $Seller_1$. 

In case (2), the parties can reach the same state as follows:
$$
\small
\begin{array}{l}
Seller_1 \mid Seller_2 \mid Buyer
\
\fwredN{2}^*\  Seller_1 \ \mid \ \singleSes{s}{m}(Q[\ldots] \mid P[\ldots,v_{date}/x_{date}]) \ \, = \ \, R'
\end{array}
$$
where $m$ is $R_{date} \stackComp R_{addr} \stackComp R_{ok} \stackComp R_{if} \stackComp R_{quote} \stackComp R_{title}$, 
with $R_i$ denoting the process generating the interaction $i$. In this case, the buyer can undo only the last two session interactions 
as follows:
$$
\small
\begin{array}{l}
R' \ \bwredN{2}\  Seller_1 \mid \singleSes{s}{m'} R_{date} \ \bwredN{2}\  Seller_1 \mid \singleSes{s}{m''} R_{addr}
\end{array}
$$
with $m'=R_{addr} \stackComp m''$ and $m''=R_{ok} \stackComp R_{if} \stackComp R_{quote} \stackComp R_{title}$. Now, the buyer can 
possibly send a different address to the seller in order to get a more suitable date (as we assume $addr()$ and $date()$ be
two non-deterministic functions abstracting the interaction with buyer and seller backends). 

Finally, in case (3), the system can reach again the state $R'$, but this time the session can be also partially reversed by means of a single backward step:
$$
\small
\begin{array}{l}
R' \ \bwredN{3}\   Seller_1 \mid \singleSes{s}{m''} R_{addr}
\end{array}
$$
\end{example}

\section{Reversibility of single multiparty sessions}
\label{sec:rev_single_sessions_multi}

For the same motivations of the binary case, we do not consider all processes allowed by the syntax 
of the reversible multiparty single-session calculus, obtained by extending the grammar in 
Figure~\ref{fig:syntax_multipi} with the single sessions construct in Figure~\ref{fig:rev_single_sessions}. 
Again, we consider only reachable processes, whose definition relies on the notion of simple process. 

\begin{definition}[Multiparty simple process]\label{def:simple_multi}
A multiparty process is \emph{simple} if (i)~it is generated by the grammar in Figure~\ref{fig:syntax_multipi} and
(ii)~it is typable with a type derivation using the prefix rules where the session typings in the 
premise and the conclusion are restricted to at most a 
singleton\footnote{Using the typing system for the synchronous multiparty session \pic\ 
\iftr
(Figure~\ref{multi_type_sys} in Appendix~\ref{app:background}), 
\else
(see \cite[Figure~15]{RC_TR}), 
\fi
point (ii) boils down to:  $\typing$ of rules 
\rulelabel{MReq}, 
\rulelabel{MAcc},
\rulelabel{Send},
\rulelabel{Recv},
\rulelabel{Sel}
and \rulelabel{Bra} are empty; 
neither \rulelabel{Deleg} nor \rulelabel{Srecv} is used;
$\typing\comp \typing'$ in \rulelabel{Conc} contains at most a singleton; 
and $\typing$ of the remaining rules contain at most a singleton.}. 
\end{definition}

\begin{definition}[Multiparty reachable processes]\label{def:reachable_multi}
The set of \emph{reachable} processes, for the case (i) in Figure~\ref{fig:cases} with $i \in \{4,5,6\}$, 
is the closure under relation $\fwredN{i}$ (see below)  
of the set of multiparty simple processes. 
\end{definition}

We now present the semantics of the three multiparty cases.
For the definition of the reduction relations we still rely on the shared rules in 
Figure~\ref{fig:reduction_binary}. 

\paragraph{\textbf{(4) Whole session reversibility.}}
In case the reversibility machinery only permits to undo a whole session, 
the forward reduction relation is defined by these additional rules:  
\begin{center}
\small
\begin{tabular}{@{}r@{\hspace*{-2.8cm}}r@{}}
$P \ \fwredN{4}\ \singleSes{s}{P}(P_n\subst{s\mendpoint{n}}{x}
\mid \prod_{i=\{1,..,n-1\}}P_i\subst{s\mendpoint{i}}{x})
$
&
\rulelabel{Fw(4)-M-Con}
\\[.2cm]
&
$P=(\mrequestAct{a}{n}{x}{P_n}\, \mid \, \prod_{i=\{1,..,n-1\}}\macceptAct{a}{i}{x}{P_i})$ 
\\[.5cm]
\multicolumn{2}{c}{
$
\infer[$\ \rulelabel{Fw(4)-Mem}$]{\singleSes{s}{m}P  \ \fwredN{4}\ \singleSes{s}{m}P'}
{P \ \red\ P' }
$
} 
\end{tabular}
\end{center}
The meaning of these rules is similar to that of \rulelabel{Fw(1)-Con} and \rulelabel{Fw(1)-Mem}, with the only difference 
that the initialised single session is multiparty.  

The backward reduction relation is given by the same rules of case (1), of course defined for relation 
$\bwredN{4}$ instead of $\bwredN{1}$.

\paragraph{\textbf{(5) Multi-step.}} 
When sessions can be reversed also partially, in a step-by-step fashion, 
the additional forward rules are as follows:
\begin{center}
\small
\begin{tabular}{@{}r@{\hspace*{-2.8cm}}r@{}}
$P \ \fwredN{5}\ \singleSes{s}{P}(P_n\subst{s\mendpoint{n}}{x}
\mid \prod_{i=\{1,..,n-1\}}P_i\subst{s\mendpoint{i}}{x})
$
&
\rulelabel{Fw(5)-M-Con}
\\[.2cm]
&
$P=(\mrequestAct{a}{n}{x}{P_n}\, \mid \, \prod_{i=\{1,..,n-1\}}\macceptAct{a}{i}{x}{P_i})$ 
\\[.5cm]
\multicolumn{2}{c}{
$
\infer[$\ \rulelabel{Fw(5)-Mem}$]{\singleSes{s}{m}P  \ \fwredN{5}\ \singleSes{s}{P \stackComp m}P'}
{P \ \red\ P' }
$
} 
\end{tabular}
\end{center}
These rules are the natural extension of the corresponding rules of the binary version, i.e. case (2).
Their meaning, indeed, is the same. 

The backward reduction relation in setting (5) is given by the same rules of case~(2) where  
relation $\bwredN{2}$ is replaced by $\bwredN{5}$.

It is worth noticing that, by Definitions~\ref{def:simple_multi} and~\ref{def:reachable_multi}, 
concurrent interactions along the same session are prevented 
(by the type discipline in \cite{KouzapasY14}, which forces a linear use of session channels). 
Therefore, there is no need here to use a more complex reversible machinery (as in \cite{JLAMP_TY15}) for 
enabling a causal-consistent form of session reversibility.

\paragraph{\textbf{(6) Single-step.}}
As in the corresponding binary session setting, here the forward reduction relation is defined 
by the same rules of the multi-step case, i.e. case (5), where $\fwredN{6}$ replaces $\fwredN{5}$. 
Instead, the backward reduction relation is defined by the same rules of case (3), 
where $\bwredN{6}$ replaces $\bwredN{3}$. 

\paragraph{\textbf{Results.}}The cost of reverting a session in setting (4) is constant, 
while it is linear in the length of the session in case (5). In setting (6), instead,
the cost is constant in terms of backward reductions, and linear in terms of memory occupancy.
\begin{theorem}
\label{prop:costs_multi}
Let $n$ be the length of a session, the costs of reverting it are:
case~(4) $\costBR=\costMO=1$;\
case~(5) $\costBR=\costMO=n$;\
case~(6) $\costBR=1$ and $\costMO=n$.\\[-.6cm]
\end{theorem}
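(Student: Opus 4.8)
The plan is to observe that the reversibility machinery of the multiparty cases (4), (5), (6) is structurally identical to that of the corresponding binary cases (1), (2), (3): the only differences are in the shape of the session-initialisation rules \rulelabel{Fw($i$)-M-Con}, which create a single session construct $\singleSes{s}{P}(\cdots)$ with an $n$-ary parallel composition of role-indexed continuations instead of a binary one, and in the auxiliary memory-manipulation rules, which are literally the same (\rulelabel{Fw($i$)-Mem} pushes the pre-transition state onto the stack in cases (5),(6) and leaves it unchanged in case (4); the backward rules \rulelabel{Bw($i$)-$j$} are copied verbatim from cases (1),(2),(3)). In particular, the cost measures $\costBR$ and $\costMO$ depend only on the memory-stack discipline, which is unaffected by the arity of the session. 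Hence I would first record this correspondence and then reuse the arguments of Theorem~\ref{prop:costs} unchanged.

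Concretely, for case~(4): by \rulelabel{Fw(4)-M-Con} the memory stack is set to the single initialisation term $P$ and, by \rulelabel{Fw(4)-Mem}, it is never modified by subsequent forward steps; therefore after any number of forward interactions the stack still has exactly one element, so $\costMO=1$, and the single backward rule \rulelabel{Bw(4)} (the analogue of \rulelabel{Bw(1)}) restores this term in one step, so $\costBR=1$. For cases~(5) and~(6) I would argue by induction on the session length $n$, exactly as in the proof of Theorem~\ref{prop:costs}: each of the $n$ forward interactions along the session is produced by one application of \rulelabel{Fw(5)-Mem} (resp.\ \rulelabel{Fw(6)-Mem}), each of which pushes one process onto the stack, so after $n$ interactions the stack has $n$ elements plus the bottom initialisation element — giving $\costMO=n$ in the sense used in Theorem~\ref{prop:costs}. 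For case~(5), completely reversing the session requires $n-1$ applications of \rulelabel{Bw(5)-2} followed by one application of \rulelabel{Bw(5)-1}, i.e.\ $n$ backward steps, so $\costBR=n$. For case~(6), the additional rules \rulelabel{Bw(6)-3} and \rulelabel{Bw(6)-4} (the analogues of \rulelabel{Bw(3)-3}, \rulelabel{Bw(3)-4}) allow jumping directly to the initialisation state from any intermediate state in one step, so $\costBR=1$, while the memory discipline is the same as in case~(5), whence $\costMO=n$.

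The only point requiring care — and the main (minor) obstacle — is to check that the presence of $n$ parties rather than two does not introduce any spurious forward interactions that would inflate the stack beyond $n$, nor create sub-sessions or delegated endpoints whose reversal would add steps. This is exactly what Definitions~\ref{def:simple_multi} and~\ref{def:reachable_multi} rule out: by the type discipline of \cite{KouzapasY14} restricted to singleton session typings, simple multiparty processes admit neither delegation (rule \rulelabel{Deleg} is forbidden) nor subordinate sessions, and session channels are used linearly, so the reachable processes on which the semantics is defined carry exactly one memory stack per running session and exactly one forward step per interaction — the same invariant exploited in the binary proof. With this invariant in hand the induction goes through verbatim, and I would simply write ``the proof is by induction on $n$ and is analogous to that of Theorem~\ref{prop:costs}'' (with the details relegated to the appendix), since the multiparty-specific constructs play no role in the counting.
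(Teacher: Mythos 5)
Your proposal matches the paper's own argument: the appendix proof of Theorem~\ref{prop:costs_multi} likewise reduces case~(4) to case~(1) of Theorem~\ref{prop:costs} and proves cases~(5) and~(6) by induction on $n$, observing that the multiparty forward rules store history exactly as the binary ones and the backward rules coincide, so the counts carry over unchanged. Just be careful with the counting convention: since the initialisation step itself counts towards the length $n$ (the base case $n=1$ is the freshly initiated session), after $n$ forward steps the stack holds $n$ elements, not ``$n$ plus the bottom element''.
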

\begin{proof}
The proof of case (4) is a trivial adaptation of the proof of case (1) in Theorem~\ref{prop:costs}, 
while proofs of cases (5) and (6) proceed by induction on $n$
\iftr
(see Appendix~\ref{app:proofsSingleSessions_multi}). 
\else
(see \cite{RC_TR}). 
\fi
\end{proof}

In all multiparty approaches, cases (4)-(6), the backward computations have 
the same semantics of the corresponding binary approaches, cases (1)-(3), respectively.
An important consequence of this fact is that the cost of reverting a session in a 
multiparty case is the same of the corresponding binary case. In other words,
we can claim that \emph{extending binary sessions to 
multiparty ones, in the single session setting, does not affect the machinery for the  
reversibility and its costs.}

As in the binary case, also the multiparty sessions of cases (5) and (6) enjoy the Loop lemma.
\begin{lemma}[Loop lemma]\label{loopLemma_multi}
Let $P = \singleSes{s}{m}Q$ and $P' = \singleSes{s}{m'}Q'$ be two reachable processes in setting (i), with $i\in \{5,6\}$.
$P \fwredN{i} P'$ if and only if $P' \bwredN{i} P$.
\end{lemma}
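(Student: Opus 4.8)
The plan is to re-run the argument used for the binary Loop lemma (Lemma~\ref{loopLemma}): the backward rules of cases~(5) and~(6) are literally those of cases~(2) and~(3), and the forward rules differ only in that session initialisation is multiparty, so the two proofs have the same shape. I would prove the \emph{only if} direction by induction on the derivation of $P\fwredN{i}P'$, and the \emph{if} direction by induction on the derivation of $P'\bwredN{i}P$. Both start with a shape analysis of the root rule. Since both $P=\singleSes{s}{m}Q$ and $P'=\singleSes{s}{m'}Q'$ carry a single-session construct at the root, the root rule cannot be the initialisation rule \rulelabel{Fw(5)-M-Con} (whose source is a parallel composition of \texttt{request}/\texttt{accept} prefixes), nor \rulelabel{Bw(2)-1}, \rulelabel{Bw(3)-1} or \rulelabel{Bw(3)-3} (whose target is the bottom element of the memory stack, i.e.\ an initialisation term of the base grammar of Figure~\ref{fig:syntax_multipi}, hence not of the form $\singleSes{s}{\cdot}\cdot$; cf.\ Property~\ref{prop:subordinate}). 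The contextual rules \rulelabel{Fw-Par}, \rulelabel{Fw-Res}, \rulelabel{Fw-Str} and their backward counterparts \rulelabel{Bw-Par}, \rulelabel{Bw-Res}, \rulelabel{Bw-Str} are dispatched by the induction hypothesis, after noting that $P$ is neither a restriction nor a non-trivial parallel composition and that $\congr$ preserves the ``single-session at the root'' shape up to inert $\inact$ components.

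This leaves the genuinely new root cases. For the \emph{only if} direction the only remaining possibility is \rulelabel{Fw(5)-Mem}/\rulelabel{Fw(6)-Mem}, which pushes the pre-state on the stack, so $m'=Q\stackComp m$ and $Q\red Q'$; inverting it, rule \rulelabel{Bw(2)-2}/\rulelabel{Bw(3)-2} immediately gives $\singleSes{s}{Q\stackComp m}Q'\ \bwredN{i}\ \singleSes{s}{m}Q$, that is $P'\bwredN{i}P$. For the \emph{if} direction with root rule \rulelabel{Bw(2)-2}/\rulelabel{Bw(3)-2} we have $P'=\singleSes{s}{Q\stackComp m}Q'$ and $P=\singleSes{s}{m}Q$, and to close the loop via \rulelabel{Fw(5)-Mem}/\rulelabel{Fw(6)-Mem} it suffices to show $Q\red Q'$. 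This is where reachability enters: I would first prove, by a routine induction on $\fwredN{i}$ starting from a multiparty simple process, a history-consistency invariant stating that every reachable single-session process has the form $\singleSes{s}{R_k\stackComp\cdots\stackComp R_1\stackComp R_0}R_{k+1}$, where $R_0$ is an initialisation term, $R_1$ the substituted session body produced from $R_0$ by \rulelabel{Fw(5)-M-Con}, and $R_j\red R_{j+1}$ for $1\le j\le k$. In particular, whenever the stack has at least two elements---which rule \rulelabel{Bw(2)-2}/\rulelabel{Bw(3)-2} requires---the top memory item reduces to the running code, which applied to $P'$ yields exactly $Q\red Q'$.

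The step I expect to be the main obstacle is the remaining root case of the \emph{if} direction when $i=6$, the ``big-jump'' rule \rulelabel{Bw(3)-4}: it relates $\singleSes{s}{m''\stackComp Q\stackComp m}R\ \bwredN{6}\ \singleSes{s}{m}Q$ for an \emph{arbitrary} erased segment $m''$, whereas a single forward step can only push one state onto the stack. The delicate point is thus reconciling the non-deterministic single-step backward jumps of case~(6) with the deterministic one-push forward steps: one must use the history-consistency invariant together with the hypothesis that both sides are reachable and of canonical form to argue that such a step necessarily coincides with an instance of \rulelabel{Bw(3)-2}---i.e.\ that the erased segment $m''$ is forced to consist of a single state---so that the already-treated case applies. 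Should this reconciliation fail to go through in full generality, the fallback is to state the Loop lemma for case~(6) relative to the single-undo fragment of $\bwredN{6}$ formed by rules \rulelabel{Bw(3)-1} and \rulelabel{Bw(3)-2}, for which the argument above is already complete. Everything else is routine bookkeeping on derivation trees, mirroring the binary development.
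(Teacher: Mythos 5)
Your proposal follows essentially the same route as the paper's proof: induction on the derivation of the reduction, with the shape of $P$ and $P'$ ruling out the initialisation rule and the bottom-of-stack backward rules, the contextual cases \rulelabel{Fw-Str}/\rulelabel{Bw-Str} dispatched by the induction hypothesis, and the genuine base cases being \rulelabel{Fw(5)-Mem}/\rulelabel{Fw(6)-Mem} inverted by the \rulelabel{Bw(2)-2}-style rule. The one place you are more explicit than the paper is the backward-to-forward direction: the paper simply appeals to reachability, asserting that $\singleSes{s}{Q\stackComp m}Q'$ ``must be generated by a forward reduction'' from $\singleSes{s}{m}Q$, whereas you extract $Q\red Q'$ from a history-consistency invariant proved by induction on reachability. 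That invariant is exactly what makes the paper's step precise (the only forward rule that produces a stack with top element $Q$ is the Mem rule applied to $\singleSes{s}{m}Q$), so this is the same idea, spelled out more carefully.

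The divergence is your treatment of the jump rule \rulelabel{Bw(3)-4} in case (6). The paper's proof never examines it: it proves case (5) and declares case (6) ``similar'', so only the single-undo backward rule is ever considered. You are right that this is the delicate point, but the reconciliation you hope for --- that reachability forces the erased segment $m''$ to be a single state so that the step collapses to \rulelabel{Bw(3)-2} --- cannot work. By the grammar of memory stacks $m''$ is nonempty, so \rulelabel{Bw(3)-4} always discards the running code plus at least one stored state (even a singleton $m''$ gives a two-step jump, not an instance of \rulelabel{Bw(3)-2}), and such configurations are reachable as soon as the stack has length at least three, i.e.\ after two interactions. Since \rulelabel{Fw(6)-Mem} pushes exactly one state, a step $P'\bwredN{6}P$ obtained by \rulelabel{Bw(3)-4} admits no single forward step $P\fwredN{6}P'$, and structural congruence cannot repair the mismatch in stack length. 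So for jump steps the backward-to-forward direction genuinely fails, and your fallback --- stating the lemma for the single-undo fragment \rulelabel{Bw(3)-1}/\rulelabel{Bw(3)-2} --- is the honest resolution; it is also, de facto, all that the paper's ``proceeds similarly'' argument establishes (the same remark applies to Lemma~\ref{loopLemma}, case (3)). Apart from that unworkable (and duly hedged) reconciliation attempt, your plan matches the paper's proof.
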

\begin{proof}
The proof for the \emph{if} (resp. \emph{only if}) part is by 
induction on the derivation of 
the forward (resp. backward) reduction 
\iftr
(see Appendix~\ref{app:proofsSingleSessions_multi}). 
\else
(see \cite{RC_TR}). 
\fi
\end{proof}

We conclude by showing the three multiparty approaches at work on the Two-Buyers-Seller protocol. 
\begin{example}[Reversible Two-Buyers-Seller protocol]\label{ex:BBS_rev}
We consider a reversible scenario of the Two-Buyers-Seller session protocol specified in Example~\ref{ex_BBS}.

In case (4), the system can evolve as follows:
$$
\small
\begin{array}{l}
Buyer_1 \mid Buyer_2 \mid Seller
\\[.1cm]
\fwredN{4}\  \singleSes{s}{Buyer_1 \mid Buyer_2 \mid Seller}\\
\qquad\qquad
(
\msendAct{s[3]}{1}{v_{title}}{\,}
\mreceiveAct{s[3]}{1}{x_{quote}}{\,}P_{b1}
\ \mid \
\mreceiveAct{s[2]}{1}{y_{quote}}{\,}P_{b2}
\\
\qquad\qquad\ \, \mid \
\mreceiveAct{s[1]}{3}{z_{title}}{\,}
\msendAct{s[1]}{2}{quote(z_{title})}{\,}
\msendAct{s[1]}{3}{\mathit{lastQuote}(z_{title})}{\,}P_s
)
\\[.1cm]
\fwredN{4}^*\  \singleSes{s}{Buyer_1 \mid Buyer_2 \mid Seller}\\
\qquad\qquad
(P_{b1}[v_{quote}/x_{quote}] \mid P_{b2}[v_{quote}/y_{quote}] \mid P_{b1}[v_{title}/x_{title}] ) \ \ = \ \ R
\end{array}
$$
These interactions lead to a state where both buyers have received the seller's quote for the requested book. 
Now, if one of the two buyers is not satisfied with the proposed quote, he can immediately stop the session execution 
and reverse it with a single step:
$$
\small
\begin{array}{l}
R \ \bwredN{4}\  Buyer_1 \mid Buyer_2 \mid Seller
\end{array}
$$

In case (5), instead, the protocol execution can lead to a similar state, say $R'$, with the difference that 
the session memory $m$ keeps track of the traversed states, i.e. 
$m$ is $R_{quote2} \stackComp R_{quote1} \stackComp R_{title}$. 
In this case, the unsatisfied buyer can enact a sort of negotiation by undoing the last two session interactions:
$$
\small
\begin{array}{l}
R' \ \bwredN{5}\  \singleSes{s}{R_{quote1} \stackComp R_{title}} R_{quote2} 
\ \bwredN{5}\  \singleSes{s}{R_{title}} R_{quote1} 
\end{array}
$$
From the state $R_{quote1}$ the seller can compute again the quote for the requested book. 

Finally, in case (6), the system can reach again the state $R'$ and, likewise case (3), the session can be partially reversed 
by means of a single backward step:
$$
\small
\begin{array}{l}
R' \ \bwredN{6}\   \singleSes{s}{R_{title}} R_{quote1} 
\end{array}
$$
\end{example}

\section{Concluding remarks}
\label{sec:conclusions}

This work falls within a large body of research that aims at studying at foundational level the integration of reversibility 
in concurrent and distributed systems. In particular, reversible variants of well-established process calculi, such as CCS
and \pic, have been proposed as \emph{untyped} formalisms for studying reversibility mechanisms in these systems. Relevant works 
along this line of research have been surveyed in \cite{Lanese14}. Among them, we would like to mention the works that 
are closely related to ours, as they have been source of inspiration:  
RCCS \cite{DanosK04}, from which we borrow the use of memory stacks for keeping track of computation history;
$R\pi$ \cite{CristescuKV13}, from which we borrow the notion of reachable process (see Definition~\ref{def:reachable}); 
$R\mu Oz$ \cite{LienhardtLMS12}, which analyses reversibility costs in terms of space overhead;
and $\rho\pi$ \cite{LaneseMS10}, from which we borrow the use of a reversible reduction semantics (which is motivated by the fact that 
a labelled semantics would complicate our theoretical framework). 
However, all works mentioned above only focus on causal-consistent reversibility mechanisms for untyped concurrent systems,
without taking into account how they may impact on linearity-based structured interactions, which is indeed our aim. 
Moreover, none of the above work provides a systematic study of the different forms of reversibility we consider, namely whole session, 
multi-step and single-step, and of their costs.  

The works with the aim closest to ours are \cite{BarbaneraDd14} and \cite{JLAMP_TY15}.
The former paper studies session reversibility on a formalism based on session 
behaviours \cite{Barbanerad15}, which is a sort of sub-language of CCS with a checkpoint-based backtracking mechanism. 
The commonality with our work is the use of a one-size memory for each behaviour, which records indeed only the 
behaviour prefixed by the lastly traversed checkpoint. This resembles the one-size memories that we use in cases (1) 
and (4), with the difference that our checkpoints correspond to the initialisation states of sessions. On the other hand, 
session behaviours provide a formalism much simpler than session-based \pic, as e.g. message passing is not even considered.
Differently from our work, \cite{BarbaneraDd14} does not consider different solutions for enabling alternative forms of 
reversibility and does not provide a study of session reversibility complexity. 
The latter paper introduces \respi, a reversible variant of the \pic\ with binary multiple sessions. 
\respi\ embeds a multi-step form of reversibility and, 
rather than using a single stack memory per session, it uses a graph-like data structure and unique thread identifiers. 
Each element of this structure is devoted to record data concerning a single event, corresponding to the taking place of a communication 
action, a choice or a thread forking. Thread identifiers are used as links among memory elements, in order to form a structure 
for conveniently keeping track of the causal dependences among the session interactions. These dependences are crucial in the 
multiple session setting, where computations have to be undone in a \emph{causal-consistent} fashion \cite{DanosK04,Levy}, that is 
independent concurrent interactions can be undone in an order different from that of the forward computation. 
Differently from the present work, which considers both binary and multiparty session types, 
\cite{JLAMP_TY15} only focusses on the binary version. In addition, it does not address any cost issues
about reversible sessions. 

We plan to study the cost of reversing multiple sessions, where interactions 
along different sessions can be interleaved. 
By looking at \respi, we can see that passing from single sessions to multiple ones has significant impacts:
firstly, in terms of complexity of the memory structure, and secondly in terms of costs. In the multiple case,
reverting a session corresponds to revert a concurrent computation in a causal-consistent way, which requires 
to revert all interactions performed along other sessions that have a casual dependence with the interactions of the 
session to be reverted. This means that, in general, the cost is not defined only in terms of the length of the considered 
session, but it must include also the cost of reverting the depending interactions of other sessions. 

Another future direction that we plan to consider for our study concerns how the use of 
primitives and mechanisms to \emph{control} reversibility (see, e.g., \cite{LaneseMSS11}) affect our results. 
In a controlled approach for session reversibility, backward steps would not be always enabled, but they 
would be triggered by specific rollback actions. 

Moreover, we intend to extend our analysis on memory cost. In fact, the approach used in this work is 
\emph{coarse-grained}, as it is based on the number of elements of the stacks rather than on the amount 
of memory necessary for storing such elements. A fine-grained view is indeed  not necessary 
for the purpose of this work, as we just want to compare the different combinations 
of session and reversibility approaches and, in particular, to distinguish the whole session reversibility
with respect to the other cases, and we want to show that single-step interaction
reversibility and multi-step ones require memory with the same number of elements. 
Nevertheless, a fine-grained analysis on memory cost and an investigation of more compact 
representations of computation history would be interesting extensions of this work.

Finally, the enactment of reversibility is currently based on the information stored in the syntactical terms 
representing the involved processes. We plan to investigate the use of type information to enact and manage 
reversibility.

\bibliographystyle{plain} 
\bibliography{biblio}

\iftr 
\newpage
\appendix
\section{Background on session-based $\pi$-calculi}
\label{app:background}

In this Appendix, we report the definitions, omitted in Section~\ref{sec:host_calculus}, 
concerning the semantics and type discipline of the considered session-based variants of \pic.

\subsection{Binary session calculus}

\paragraph{Binders.}
Bindings are defined as follows:
$\requestAct{u}{x}{P}$, $\acceptAct{u}{x}{P}$ and $\receiveAct{k}{x}{P}$
bind variable $x$ in $P$;
$\res{a}\, P$ binds shared channel $a$ in $P$;
$\res{s}\, P$ binds session channel $s$ in $P$;
finally, $\recAct{X}{P}$ binds process variable $X$ in $P$.
The derived notions of bound and free names, alpha-equivalence $\alphaeq$, and 
substitution are standard. 
For $P$ a process, 
$\freev{P}$ denotes the set of \emph{free variables}, 
$\freec{P}$ denotes the set of \emph{free shared channels},
and $\freese{P}$ the set of \emph{free session endpoints}. 
%
For the sake of simplicity, we assume that free and bound variables are always chosen 
to be different, and that bound variables are pairwise distinct; the same applies to names.
Of course, these conditions are not restrictive and can always be fulfilled by possibly 
using alpha-conversion.

\paragraph{Structural congruence.}
The \emph{structural congruence}, written $\congr$, is defined as the smallest congruence relation 
on processes that includes the equational laws shown in Figure~\ref{fig:congruence_pic}. 
These are the standard laws of \pic: 
the first three are the monoid laws for $\mid$ (i.e., it is associative and commutative, and has 
$\inact$ as identity element); 
the second four laws deal with restriction and enable 
garbage-collection of channels, scope extension and scope swap, respectively; 
the eighth law permits unfolding a recursion (notation $P\subst{Q}{X}$ denotes 
replacement of free occurrences of $X$ in $P$ by process $Q$);
the last law equates alpha-equivalent processes, i.e.~processes only differing in the identity of 
bound variables/channels.

\begin{figure}[h]
\small
	\centering
	\begin{tabular}{l@{\qquad\ \ }l}	
	$(P \mid Q) \mid R \congr P \mid (Q \mid R)$ 
	&
	$P \mid Q \congr Q \mid P$ 
	\\[.3cm]
	$P \mid \inact \congr P$
	&
	$\res{c} \inact \congr \inact$
	\\[.3cm]
	$\res{a}P \mid Q \congr \res{a}(P \mid Q)$\ \ if $a \notin \freec{Q}$
	&
	$\res{c_1}\res{c_2}P \congr \res{c_2}\res{c_1}P$
	\\[.3cm]
	$\res{s}P \mid Q \congr \res{s}(P \mid Q)$\ \ if $s,\ce{s} \notin \freese{Q}$
	&
	$\recAct{X}{P} \congr P\subst{\recAct{X}{P}}{X}$
	\\[.3cm]
	$P \congr Q$\ \ if $P \alphaeq Q$
	\\[.2cm]
	\hline
	\end{tabular}
	\caption{Binary session calculus: structural congruence}
	\label{fig:congruence_pic}
\end{figure}

\paragraph{Types.}
As in \cite{YoshidaV07}, we take an \emph{equi-recursive} view of types, not distinguishing between 
a type $\recType{t}.\alpha$ and its unfolding $\alpha\subst{\recType{t}.\alpha}{t}$, and we are 
interested on \emph{contractive} types only, i.e. for each of sub-expressions 
$\recType{t}.\recType{t_1}\ldots\recType{t_n}.\alpha$  the body $\alpha$ is not $t$. 
Thus, in a typing derivation, types $\recType{t}.\alpha$ and $\alpha\subst{\recType{t}.\alpha}{t}$ can be used interchangeably. 

For each type $\alpha$, we define $\dual{\alpha}$, the \emph{dual type} of $\alpha$, 
by exchanging $!$ and $?$, and $\&$ and $\oplus$. The inductive definition is in Figure~\ref{fig:dualType}.
\begin{figure}[h]
	\centering
	\small
	\begin{tabular}{@{}r@{\ }c@{\ }l@{\quad}r@{\ }c@{\ }l@{\quad}r@{\ }c@{\ }l@{\quad}r@{\ }c@{\ }l@{}}
	$\dual{\outType{S}.\alpha}$ & = & $\inpType{S}.\dual{\alpha}$
	&
	$\dual{\outType{\beta}.\alpha}$ & = & $\inpType{\beta}.\dual{\alpha}$	
	&
	$\dual{\selType{\branch{l_i}{\alpha_i}}_{i\in I}}$ & = & $\branchType{\branch{l_i}{\dual{\alpha_i}}}_{i\in I}$
	&
	$\dual{\inactType}$ & = & $\inactType$
	\\[.3cm]
	$\dual{\inpType{S}.\alpha}$ & = & $\outType{S}.\dual{\alpha}$	
	&
	$\dual{\inpType{\beta}.\alpha}$ & = & $\outType{\beta}.\dual{\alpha}$	
	&
	$\dual{\branchType{\branch{l_i}{\alpha_i}}_{i\in I}}$ & = & $\selType{\branch{l_i}{\dual{\alpha_i}}}_{i\in I}$	
	&
	\multicolumn{3}{l}{$\dual{\recType{t}.\alpha} \,=\, \recType{t}.\dual{\alpha}
	\quad \dual{t}\,=\,t$}
	\\[.2cm]
	\hline
	\end{tabular}
	\caption{Dual types}
	\label{fig:dualType}
\end{figure}

\paragraph{Typing System.}
%
Typing judgement are of the form $\basis;\sorting \judge P \hasType \typing$.
The typing system is defined by the axioms and rules in Figure~\ref{fig:typingSysytem}. 
We call a typing \emph{completed} when it contains only $\inactType$ types. 
A typing $\typing$ is called \emph{balanced} if  whenever $s:\alpha,\ce{s}:\beta\in \typing$, then $\alpha=\dual{\beta}$.
We refer the interested reader to \cite{YoshidaV07} for detailed comments on the rules
and results.

\begin{figure}[p]
	\centering
	\small
	\begin{tabular}{@{}c@{}}
	$\sorting \judge \ctrue \hasType \boolType$\ \ \rulelabel{Bool$_{\mathit{tt}}$}
	\qquad
	$\sorting \judge \cfalse \hasType \boolType$\ \ \rulelabel{Bool$_{\mathit{ff}}$}
	\qquad
	$\sorting \judge 1 \hasType \intType$\ \ \rulelabel{Int}	
	\qquad \ldots
	\\[.4cm]
	$
	\infer[$\ \rulelabel{Sum}$]{\sorting \judge +(e_1,e_2) \hasType \intType}
	{\sorting \judge e_1 \hasType \intType & & \sorting \judge e_2 \hasType \intType}
	$		
	\qquad
	$
	\infer[$\ \rulelabel{And}$]{\sorting \judge \wedge(e_1,e_2) \hasType \boolType}
	{\sorting \judge e_1 \hasType \boolType & & \sorting \judge e_2 \hasType \boolType}
	$		
	\qquad \ldots
	\\[.2cm]
	\end{tabular}
	\begin{tabular}{@{\hspace*{-2cm}}r@{\hspace*{.4cm}}r}
	$\sorting\comp u:S \judge u \hasType S$\ \ \rulelabel{Id}	
	&
	\infer[$\ \rulelabel{Inact}$]{\basis;\sorting \judge \inact \hasType \typing}
	{\typing\ \  \text{completed}}	
	\\[.4cm]	
	$
	\infer[$\ \rulelabel{Req}$]{\basis;\sorting \judge \requestAct{u}{x}{P} \hasType \typing}
	{\sorting \judge u \hasType \sharedChanType{\alpha} & & \basis;\sorting \judge P \hasType \typing\comp x:\dual{\alpha}}
	$		
	&
	$
	\infer[$\ \rulelabel{Acc}$]{\basis;\sorting \judge \acceptAct{u}{x}{P} \hasType \typing}
	{\sorting \judge u \hasType \sharedChanType{\alpha} & & \basis;\sorting \judge P \hasType \typing\comp x:\alpha}
	$		
	\\[.4cm]	
	$
	\infer[$\ \rulelabel{Send}$]{\basis;\sorting \judge \sendAct{k}{e}{P} \hasType \typing\comp k:\outType{S}.\alpha}
	{\sorting \judge e \hasType S & & \basis;\sorting \judge P \hasType \typing\comp k:\alpha}
	$		
	&
	$
	\infer[$\ \rulelabel{Thr}$]{\basis;\sorting \judge \sendAct{k}{k'}{P} \hasType \typing\comp k:\thrType{\alpha}.\beta\comp k':\alpha}
	{\basis;\sorting \judge P \hasType \typing\comp k:\beta}
	$		
	\\[.4cm]	
	$
	\infer[$\ \rulelabel{Rcv}$]{\basis;\sorting \judge \receiveAct{k}{x}{P} \hasType \typing\comp k:\inpType{S}.\alpha}
	{\basis;\sorting\comp x:S \judge P \hasType \typing\comp k:\alpha}
	$		
	&
	$
	\infer[$\ \rulelabel{cat}$]{\basis;\sorting \judge \receiveAct{k}{x}{P} \hasType \typing\comp k:\catType{\alpha}.\beta}
	{\basis;\sorting \judge P \hasType \typing\comp k:\beta \comp x:\alpha}
	$		
	\\[.4cm]	
	\multicolumn{2}{c}{	
	$
	\infer[(1 \leq j \leq n)\  $\ \rulelabel{Sel}$]{\basis;\sorting \judge \selectAct{k}{l_j}{P} \hasType \typing
	\comp k:\selType{\branch{l_1}{\alpha_1}, \ldots, \branch{l_n}{\alpha_n}}}
	{\basis;\sorting  \judge P \hasType \typing \comp k:\alpha_j}
	$}
	\\[.4cm]	
	\multicolumn{2}{c}{	
	$
	\infer[$\ \rulelabel{Br}$]{\basis;\sorting \judge \branchAct{k}{\branch{l_1}{P_1} \branchSep \ldots \branchSep \branch{l_n}{P_n}} \hasType \typing
	\comp k:\branchType{\branch{l_1}{\alpha_1}, \ldots, \branch{l_n}{\alpha_n}}}
	{\basis;\sorting  \judge P_1 \hasType \typing \comp k:\alpha_1 && \ldots 
	 && \basis;\sorting  \judge P_n \hasType \typing \comp k:\alpha_n}
	$}
	\\[.4cm]		
	\multicolumn{2}{c}{
	$
	\infer[$\ \rulelabel{If}$]{\basis;\sorting \judge \ifthenelseAct{e}{P}{Q} \hasType \typing}
	{\sorting  \judge e \hasType \boolType &&
	\basis;\sorting  \judge P \hasType \typing && \basis;\sorting  \judge Q \hasType \typing}
	$}	
	\\[.4cm]	
	$
	\infer[$\ \rulelabel{Conc}$]{\basis;\sorting \judge P\mid Q \hasType \typing\comp \typing'}
	{\basis;\sorting  \judge P \hasType \typing && \basis;\sorting  \judge Q \hasType \typing'}
	$		
	&
	$
	\infer[$\ \rulelabel{Res1}$]{\basis;\sorting \judge \res{a}{P} \hasType \typing}
	{\basis;\sorting\comp a:S  \judge P \hasType \typing}
	$	
	\\[.4cm]	
	$
	\infer[$\ \rulelabel{Res2}$]{\basis;\sorting \judge \res{s}{P} \hasType \typing}
	{\basis;\sorting  \judge P \hasType \typing\comp s:\alpha \comp \ce{s}:\dual{\alpha}}
	$	
	&
	$
	\infer[$\ \rulelabel{Res3}$]{\basis;\sorting \judge \res{s}{P} \hasType \typing}
	{\basis;\sorting  \judge P \hasType \typing && s\ \text{ not in }\ \typing}
	$	
	\\[.4cm]	
	$\basis\comp X:\typing;\sorting \judge X \hasType \typing$\ \ \rulelabel{Var}			
	&
	\infer[$\ \rulelabel{Rec}$]{\basis;\sorting \judge \recAct{X}{P} \hasType \typing}
	{\basis\comp X:\typing;\sorting  \judge P \hasType \typing}	
	\\[.2cm]
	\hline
	\end{tabular}
	\caption{Typing System for binary session calculus}
	\label{fig:typingSysytem}
\end{figure}

\subsection{Multiparty session calculus}

\paragraph{Types.}
\emph{Global types}, ranged over by $G$, $G'$, \ldots describe the whole conversation scenario of a 
multiparty session as a type signature. The grammar of the global types is given in Figure~\ref{multi_types} (left).
\emph{Local types}, instead, correspond to the communication actions, representing sessions from the view-point of a single role.
Figure 5 (right) defines the syntax of local types.
\begin{figure}[!h]
\centering
\includegraphics[scale=.45]{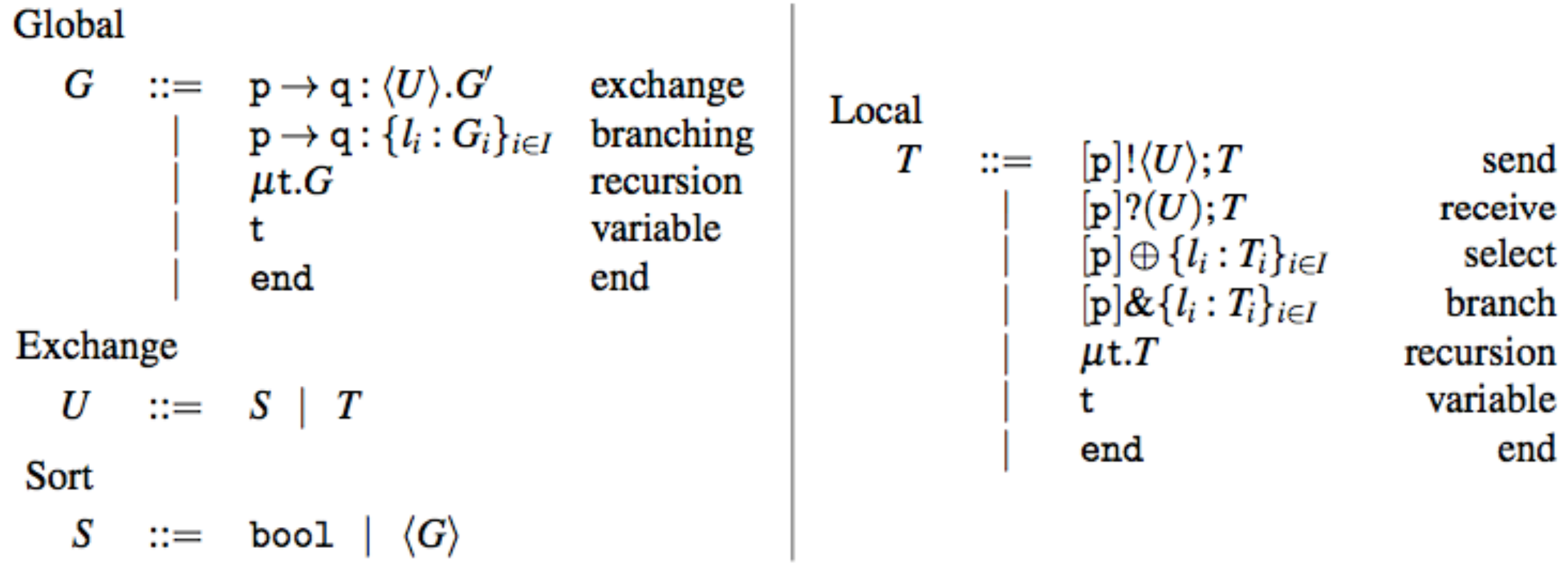}
\vspace*{-.2cm}
\caption{Global and local types}
\label{multi_types}
\vspace*{-.4cm}
\end{figure}

\paragraph{Typing System.}
Typing judgement are of the form $\sorting \judge P \hasType \typing$.
The typing system is defined by the axioms and rules in Figure~\ref{multi_type_sys}
(which is drawn from \cite{KouzapasY14} and can be reconciled with our notation by replacing symbols 
$c$, $\oplus$, $\&$, $\mathtt{tt}$ and $\mathtt{ff}$ by symbols $k$, $\triangleleft$, $\triangleright$, $\mathtt{true}$ and $\mathtt{false}$, respectively).
%
We refer the interested reader to \cite{KouzapasY14} for detailed comments on the rules
and results.
\begin{figure}[t]
\includegraphics[scale=.45]{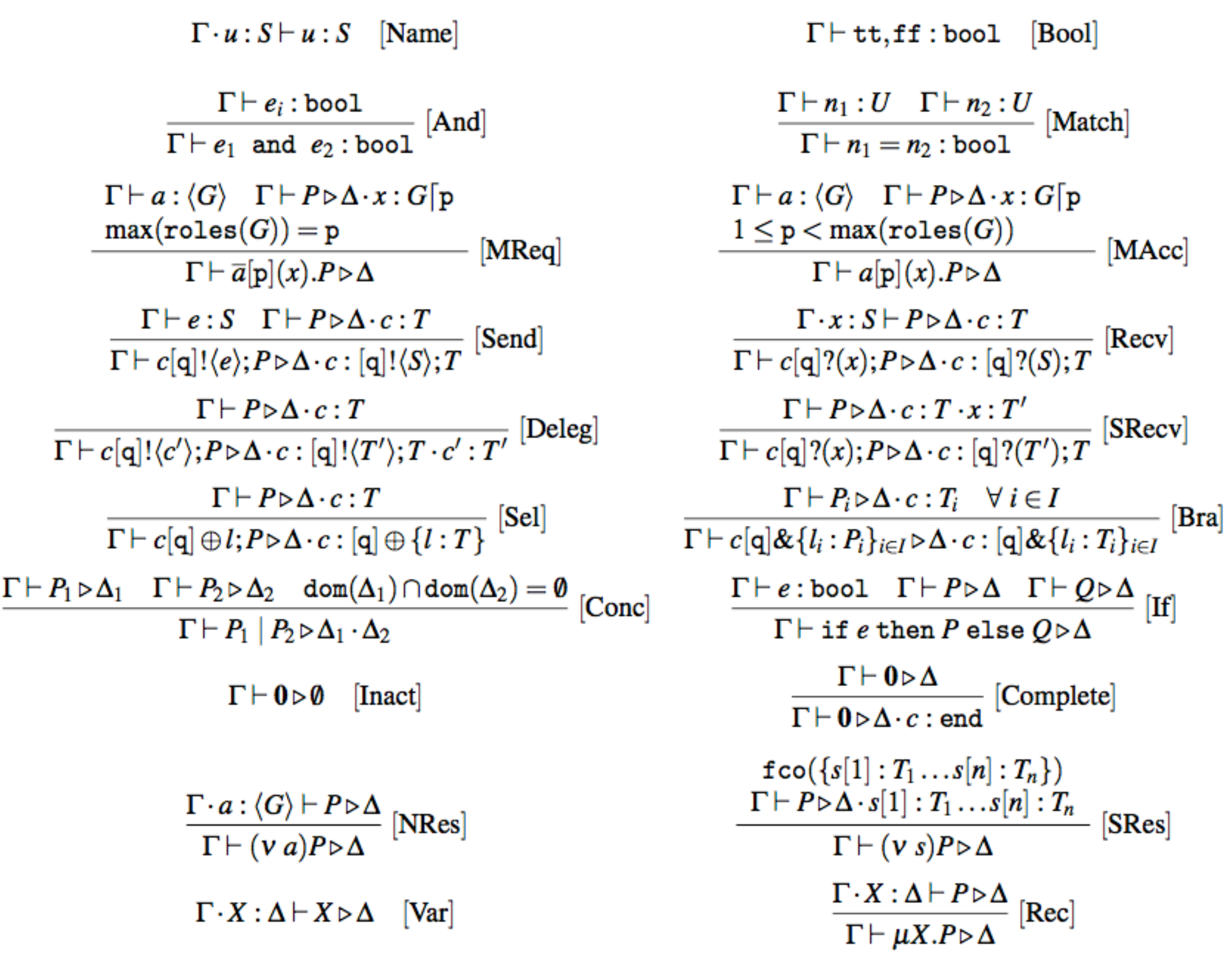}
\vspace*{-.8cm}
\caption{Typing system for synchronous multiparty session calculus}
\label{multi_type_sys}
\vspace*{-.4cm}
\end{figure}

\section{Proofs of Section~\ref{sec:rev_single_sessions}}
\label{app:proofsSingleSessions}

\propertyApp{\ref{ex:noDelegation}}
In a simple process, delegation is disallowed.
\begin{proof}
Delegation is achieved by passing session endpoints (which will be used, i.e. 
session endpoints not typed by end types) along a session. 
The proof proceeds by contradiction. Suppose that there exists a simple process 
that intends to perform a delegation by exchanging the session endpoint $s'$ 
on $\ce{s}$. Let us consider the sending term; it must have the form 
$\sendAct{\ce{s}}{s'}P$. However, this term is not typable under any $\basis$, $\sorting$ and 
typing $\typing$ of the form $\typing' \comp \ce{s}:\thrType{\alpha}.\beta\comp s':\alpha$.
In fact, due to the restrictions on the session typing required in Definiton~\ref{def:simple}, 
no rule to type session endpoint output (see rule \rulelabel{Thr} in Figure~\ref{fig:typingSysytem}) 
can be applied (because the typing in the rule conclusion cannot contain two sessions). 
Therefore, the whole process is not typable and, hence, 
according to Definition~\ref{def:simple} it cannot be simple, which contradicts the hypothesis. 
\qed
\end{proof}

\bigskip

\propertyApp{\ref{prop:subordinate}}
In a simple process, subordinate initialisation is disallowed.
\begin{proof}
We proceed by contradiction. Suppose that there exists a simple process 
that intends to start a subordinate session. We have two cases: the 
subordinate session is initialised either \emph{(i)} explicitly or \emph{(ii)} implicitly (using recursion). 
In case \emph{(i)}, the simple process contains a process, say $P$, that initialises a session 
using a given channel, say $b$, within a session previously initialised using a (possibly different) channel, say $a$. 
Let us assume, w.l.o.g., that the two initialisation actions are performed by $P$ one immediately after the other.  
Moreover let us consider the case where the outer session is used for sending a value (the other cases are similar).
Thus, process $P$ would have the form $\acceptAct{a}{x}{\acceptAct{b}{y}{\sendAct{x}{1}Q}}$
with $y \in \freec{Q}$. 
Now, by Definition~\ref{def:simple}, the judgement $\basis;\sorting \judge P \hasType \emptyset$ must hold for some $\basis$ 
and $\sorting$. To derive it, rule \rulelabel{Acc} in Figure~\ref{fig:typingSysytem} requires judgement $\basis;\sorting \judge \acceptAct{b}{y}{\sendAct{x}{1}Q} \hasType  x:\outType{\intType}.\alpha$ to hold. However, rule \rulelabel{Acc} cannot be applied again to derive this latter judgement, because 
typing $x:\outType{\intType}.\alpha$ is different from $\emptyset$ (as required by the conclusion of rule \rulelabel{Acc}). Thus, $P$ is not typable and, hence,
the whole process is not simple, which contradicts the hypothesis. 
Let us consider now the case \emph{(ii)}, where a subordinate session is created by resorting 
to recursion to reinitialise the session. We consider again the case where the session is used 
for sending a value (the other cases are similar). Thus, the simple process contains a process, say $R$, 
that has the form 
$\recAct{X}{\acceptAct{a}{y}{\sendAct{y}{1}}{X}}$, with $a$ having type
$\alpha = \outType{\intType}.{\inactType}$. However, process $R$ is not typable, as shown by the following derivation tentative:
$$
\small
\infer[$\ \rulelabel{Rec}$]{\emptyset;\sorting \judge R \hasType \emptyset}
{
\infer[$\ \rulelabel{Acc}$]{X:\emptyset;\sorting  \judge \acceptAct{a}{y}{\sendAct{y}{1}}{X} \hasType \emptyset}
{
\infer[$\ \rulelabel{Send}$]{X:\emptyset;\sorting \judge \sendAct{y}{1}{X} \hasType y:\outType{\intType}.{\inactType}}
{
X:\emptyset;\sorting \judge X \hasType y:\inactType\ \ $\ \rulelabel{Var}$  \lightning
}
}
}	
$$
where rule \rulelabel{Var} cannot be applied (denoted by symbol $\lightning$) because the type $\emptyset$ associated to $X$ by the basis 
differs from the typing $y:\inactType$.
Thus, the whole process is not simple, which contradicts the hypothesis. 
\qed
\end{proof}
Notably, even if recursion cannot be used in a simple process to reinitialise a session, inside a session recursion is still enabled. 
For example, $\recAct{X}{\sendAct{s}{1}}{X}$ is typable under typing $s:\recType{t}.\outType{\intType}.t$. Similarly, also parallel sessions 
are still allowed, like e.g. in $(\requestAct{a}{x_1}{P_1} \mid \acceptAct{a}{x_2}{P_2} \mid \requestAct{b}{x_3}{P_3} \mid \acceptAct{b}{x_4}{P_4})$ 
or in $\recAct{X}{(\requestAct{a}{x_1}{P_1} \mid \acceptAct{a}{x_2}{P_2} \mid X)}$.

\bigskip

\theoremApp{\ref{prop:costs}}
Let $n$ be the length of a session, the cost of reverting it is\\ 
\mbox{\ \ } -- case~(1): $\costBR=\costMO=1$;\\
\mbox{\ \ } -- case~(2): $\costBR=\costMO=n$;\\
\mbox{\ \ } -- case~(3): $\costBR=1$ and $\costMO=n$.
\begin{proof} We proceed separately with the three cases.
\begin{itemize}
\item \textbf{Case (1).}\quad
The proof is straightforward. Let $\singleSes{s}{m}Q$ be the term corresponding to 
a given session to be reverted. Concerning the occupancy of the memory stack $m$, 
exactly one element is stored in $m$ before performing the rollback. 
Indeed, according to rule \rulelabel{Fw(1)-Con}, when a session is initiated, its memory $m$ 
only contains the initiating process, say $P$. The single session term then can evolve only 
by means of application of rule \rulelabel{Fw(1)-Mem}, which never modifies the memory $m$. Therefore, 
the session always keeps the form $\singleSes{s}{P}\cdot$, i.e. its memory always contains just one element
($\costMO=1$). 
Concerning the cost in terms of backward reduction steps, by applying rule \rulelabel{Bw(1)}, the session
$\singleSes{s}{P}Q$ can be completely reverted to the initiating state $P$ by means of exactly one step
($\costBR=1$). 
Notably, both costs are independent from the length of the session, i.e.~from the number of forward interactions 
previously performed along the session and leading to the process $Q$. 

\item \textbf{Case (2).}\quad
We proceed by induction on $n$.\\ 
\noindent \emph{Base case} ($n=1$): the session is just initiated by means of 
rule \rulelabel{Fw(2)-Con}, thus it has the form $\singleSes{s}{P}Q$. The number of backward reductions 
to revert it is exactly one ($\costBR=1$), which is inferred by applying rule \rulelabel{Bw(2)-1}. Moreover, the dimension of
the memory stack of $\singleSes{s}{P}Q$ is exactly one ($\costMO=1$.). Thus, $\costBR=\costMO=1=n$. \\
\noindent
\emph{Inductive case}: let us consider a session $\singleSes{s}{m}Q$ with length $n$ and a 
session $\singleSes{s}{m'}Q'$ such that $\singleSes{s}{m}Q \fwredN{2} \singleSes{s}{m'}Q'$.
By definition of session length, $\singleSes{s}{m'}Q'$ has length $n+1$. 
Now, we have to prove that for this sessions $\costBR=\costMO=n+1$. 
The session $\singleSes{s}{m}Q$ can evolve to $\singleSes{s}{m'}Q'$ only 
by means of application of rule \rulelabel{Fw(2)-Mem}. Therefore, we have that 
$m' = Q\stackComp m$. 
By inductive hypothesis, the memory occupancy of $\singleSes{s}{m}Q$
is $n$, i.e. $|m|=n$. Thus, the memory occupancy of $\singleSes{s}{m'}Q'$ 
(i.e., the cost $\costMO$)  is $n+1$ because 
$|m'|=|Q\stackComp m|=1+|m|=n+1$.
Moreover, by inductive hypothesis we also have that the number of backward reduction steps
for reverting $\singleSes{s}{m}Q$ is $n$. By applying rule \rulelabel{Bw(2)-2} to  
$\singleSes{s}{m'}Q'$ we have 
$\singleSes{s}{m'}Q' \bwredN{2} \singleSes{s}{m}Q$. Therefore, the number of backward 
reduction steps for reverting $\singleSes{s}{m'}Q'$ (i.e., the cost $\costBR$) is $n+1$.

\item \textbf{Case (3).}\quad
We proceed by induction on $n$.\\ 
\noindent \emph{Base case} ($n=1$): the session is just initiated, 
thus it has the form $\singleSes{s}{P}Q$. The number of backward reductions 
to revert it is exactly one ($\costBR=1$), which is inferred by applying rule \rulelabel{Bw(3)-1}. Moreover, the dimension of
the memory stack of $\singleSes{s}{P}Q$ is exactly one ($\costMO=1$.). Thus, $\costBR=1$ and $\costMO=1=n$. \\
\noindent
\emph{Inductive case}: let us consider a session $\singleSes{s}{m}Q$ with length $n$ and a 
session $\singleSes{s}{m'}Q'$ such that $\singleSes{s}{m}Q \fwredN{3} \singleSes{s}{m'}Q'$.
By definition of session length, $\singleSes{s}{m'}Q'$ has length $n+1$. 
Now, we have to prove that for this sessions $\costBR=\costMO=n+1$. 
%
The session $\singleSes{s}{m}Q$ can evolve to $\singleSes{s}{m'}Q'$ only 
by means of application of rule \rulelabel{Fw(3)-Mem}. Therefore, we have that 
$m' = Q\stackComp m$. 
By inductive hypothesis, the memory occupancy of $\singleSes{s}{m}Q$
is $n$, i.e. $|m|=n$. Thus, the memory occupancy of $\singleSes{s}{m'}Q'$ 
(i.e., the cost $\costMO$)  is $n+1$ because 
$|m'|=|Q\stackComp m|=1+|m|=n+1$.
Now, since $|m'|>0$, the stack $m'$ must have a bottom element, say $P$, that is 
$m'=m''\stackComp P$ for some $m''$. Then, by applying rule \rulelabel{Bw(3)-3} to  
$\singleSes{s}{m'}Q'$ we have 
$\singleSes{s}{m''\stackComp P}Q' \bwredN{3} P$, i.e. the session is completely reversed in one backward 
reduction steps ($\costBR=1$).

\end{itemize}
\qed
\end{proof}

\lemmaApp{\ref{loopLemma}}
Let $P = \singleSes{s}{m}Q$ and $P' = \singleSes{s}{m'}Q'$ be two reachable processes in setting (i), with $i\in \{2,3\}$.
$P \fwredN{i} P'$ if and only if $P' \bwredN{i} P$.
\begin{proof}
We consider the case (2); the proof for case (3) proceeds similarly.  
Let us start with the proof for the \emph{if} part, which is by induction on the derivation of 
the forward reduction $\singleSes{s}{m}Q \fwredN{2} \singleSes{s}{m'}Q'$. 
According to the form of the source term of the reduction, which is a single session term,  
the only base case corresponds to the application of rule \rulelabel{Fw(2)-Mem}. In this case, we have that 
$m'=Q\stackComp m$.
By applying rule \rulelabel{Bw(2)-2} to $\singleSes{s}{m'}Q'$, we can directly 
conclude $\singleSes{s}{Q\stackComp m}Q' \bwredN{2} \singleSes{s}{m}Q$.
Concerning the inductive case, again, due to the the form of the source term of the reduction, there is only 
one rule that can be applied, that is \rulelabel{Fw-Str}. We have that 
$\singleSes{s}{m}Q \equiv \singleSes{s_1}{m_1}Q_1$,  
$\singleSes{s}{m'}Q' \equiv \singleSes{s_2}{m_2}Q_2$ 
and  
$\singleSes{s_1}{m_1}Q_1 \fwredN{2} \singleSes{s_2}{m_2}Q_2$. 
By induction $\singleSes{s_2}{m_2}Q_2 \bwredN{2} \singleSes{s_1}{m_1}Q_1$. 
Thus, we conclude by applying rule \rulelabel{Bw-Str}, since we directly get
$\singleSes{s}{m'}Q' \bwredN{2} \singleSes{s}{m}Q$.

Now, let us consider the proof for the \emph{only if} part, which is again 
by induction on the derivation of the forward reduction $\singleSes{s}{m'}Q'  \bwredN{2} \singleSes{s}{m}Q$.
In this case, the only base case corresponds to the rule \rulelabel{Bw(2)-2}. Therefore, $m'$ must be $Q\stackComp m$. 
By hypothesis, $\singleSes{s}{Q\stackComp m}Q' $ is a reachable process; hence, by Definition~\ref{def:reachable}, 
this term must be generated by a forward reduction 
$\singleSes{s}{m}Q \fwredN{2} \singleSes{s}{Q\stackComp m}Q'$, which allows us to conclude. 
As in the \emph{if} part, the inductive case corresponds to only rule \rulelabel{Bw-Str}. We have that 
$\singleSes{s}{m'}Q' \equiv \singleSes{s_1}{m_1}Q_1$,  
$\singleSes{s}{m}Q \equiv \singleSes{s_2}{m_2}Q_2$ 
and  
$\singleSes{s_1}{m_1}Q_1 \bwredN{2} \singleSes{s_2}{m_2}Q_2$. 
By induction $\singleSes{s_2}{m_2}Q_2 \fwredN{2} \singleSes{s_1}{m_1}Q_1$. 
Thus, we conclude by applying rule \rulelabel{Fw-Str}, since we directly get
$\singleSes{s}{m}Q \fwredN{2} \singleSes{s}{m'}Q'$. \qed
\end{proof}

\section{Proofs of Section~\ref{sec:rev_single_sessions_multi}}
\label{app:proofsSingleSessions_multi}

%
%
%
%
\theoremApp{\ref{prop:costs_multi}}
Let $n$ be the length of a session, the cost of reverting it is\\ 
\mbox{\ \ } -- case~(4): $\costBR=\costMO=1$;\\
\mbox{\ \ } -- case~(5): $\costBR=\costMO=n$;\\
\mbox{\ \ } -- case~(6): $\costBR=1$ and $\costMO=n$.
\begin{proof} We proceed separately with the three cases.
\begin{itemize}
\item \textbf{Case (4).}\quad
The proof is a trivial adaptation of the one of case (1) in Theorem~\ref{prop:costs}. Essentially, 
it is just needed to consider rules \rulelabel{Fw(4)-Con}, \rulelabel{Fw(4)-Mem} and the 
backward rule of case (4) in place of rules \rulelabel{Fw(1)-Con}, \rulelabel{Fw(1)-Mem}
and \rulelabel{Bw(1)}, respectively.

\item \textbf{Case (5).}\quad
This result can be proved by induction on $n$ by following the same steps of the proof of
case (2) in Theorem~\ref{prop:costs}. 
Indeed, the forward rules of cases (2) and (5) store the computation history in the same way, while 
the backward rules are exactly the same. Therefore, in the setting (5) both the backward reduction cost and the memory occupancy 
cost are the same of case (2), i.e. $\costBR=\costMO=n$.

\item \textbf{Case (6).}\quad
This result can be proved by induction on $n$ by following the same steps of the proof of
case (3) in Theorem~\ref{prop:costs}. 
Indeed, both the forward and backward rules of cases (3) and (6) are the same. 
Therefore, in the setting (6) both the backward reduction cost and the memory occupancy 
cost are the same of case (3), i.e. $\costBR=1$ and $\costMO=n$.
\end{itemize}
\end{proof}

\lemmaApp{\ref{loopLemma_multi}}
Let $P = \singleSes{s}{m}Q$ and $P' = \singleSes{s}{m'}Q'$ be two reachable processes in setting (i), with $i\in \{5,6\}$.
$P \fwredN{i} P'$ if and only if $P' \bwredN{i} P$.

\begin{proof}
This proof follows the same steps of the proof of Lemma~\ref{loopLemma}, as  
the forward rules of cases (2) and (5) store the computation history in the same way, 
and their backward rules are the same; and 
the forward and backward rules of cases (3) and (6) are the same. 
Let us consider the case (5); the proof for case (6) proceeds similarly.  
We start with the proof for the \emph{if} part, which is by induction on the derivation of 
the forward reduction $\singleSes{s}{m}Q \fwredN{5} \singleSes{s}{m'}Q'$.
According to the form of the source term of the reduction, which is a single session term,  
the only base case corresponds to the application of rule \rulelabel{Fw(5)-Mem}. In this case, we have that 
$m'=Q\stackComp m$.
By applying rule \rulelabel{Bw(5)-2} to $\singleSes{s}{m'}Q'$, we can directly 
conclude $\singleSes{s}{Q\stackComp m}Q' \bwredN{5} \singleSes{s}{m}Q$.
Concerning the inductive case, again, due to the the form of the source term of the reduction, there is only 
one rule that can be applied, that is \rulelabel{Fw-Str}. We have that 
$\singleSes{s}{m}Q \equiv \singleSes{s_1}{m_1}Q_1$,  
$\singleSes{s}{m'}Q' \equiv \singleSes{s_2}{m_2}Q_2$ 
and  
$\singleSes{s_1}{m_1}Q_1 \fwredN{5} \singleSes{s_2}{m_2}Q_2$. 
By induction $\singleSes{s_2}{m_2}Q_2 \bwredN{5} \singleSes{s_1}{m_1}Q_1$. 
Thus, we conclude by applying rule \rulelabel{Bw-Str}, since we directly get
$\singleSes{s}{m'}Q' \bwredN{5} \singleSes{s}{m}Q$.

Now, let us consider the proof for the \emph{only if} part, which is again 
by induction on the derivation of the forward reduction $\singleSes{s}{m'}Q'  \bwredN{5} \singleSes{s}{m}Q$.
In this case, the only base case corresponds to the rule \rulelabel{Bw(5)-2}. Therefore, $m'$ must be $Q\stackComp m$. 
By hypothesis, $\singleSes{s}{Q\stackComp m}Q' $ is a reachable process; hence, by Definition~\ref{def:reachable}, 
this term must be generated by a forward reduction 
$\singleSes{s}{m}Q \fwredN{5} \singleSes{s}{Q\stackComp m}Q'$, which allows us to conclude. 
As in the \emph{if} part, the inductive case corresponds to only rule \rulelabel{Bw-Str}. We have that 
$\singleSes{s}{m'}Q' \equiv \singleSes{s_1}{m_1}Q_1$,  
$\singleSes{s}{m}Q \equiv \singleSes{s_2}{m_2}Q_2$ 
and  
$\singleSes{s_1}{m_1}Q_1 \bwredN{5} \singleSes{s_2}{m_2}Q_2$. 
By induction $\singleSes{s_2}{m_2}Q_2 \fwredN{5} \singleSes{s_1}{m_1}Q_1$. 
Thus, we conclude by applying rule \rulelabel{Fw-Str}, since we directly get
$\singleSes{s}{m}Q \fwredN{5} \singleSes{s}{m'}Q'$. \qed
\end{proof}

\else
\fi

\end{document}